\documentclass[journal,twocolumn,10pt]{IEEEtran}
\usepackage[T1]{fontenc}
\usepackage{cite}
\usepackage[cmex10]{amsmath}
\usepackage{amsthm}
\newtheorem{remark}{Remark}
\usepackage{amssymb}
\usepackage{bbm}
\usepackage{mathtools}
\usepackage[caption=false,font=normalsize,labelfont=sf,textfont=sf]{subfig}
\usepackage[hidelinks]{hyperref}
\usepackage{algorithmic}
\usepackage{textcomp}
\usepackage{cite}
\usepackage[usenames, dvipsnames]{color}
\usepackage{subcaption}
\usepackage{graphicx}
\usepackage{float}
\usepackage{url}
\usepackage[usestackEOL]{stackengine}
\usepackage{verbatim}
\usepackage{dsfont}
\usepackage{balance}

\theoremstyle{definition}

\newtheorem{lemma}{Lemma}

\newcommand{\mbf}[1]{\mathbf{#1}}
\newcommand{\mbb}[1]{\mathbb{#1}}

\newcommand{\mymod}[1]{\mathrm{mod}}

\usepackage{graphicx}

\begin{document}

\title{On the Performance of Randomly Generated Codebooks in Near-Field Communications}

\author{{\IEEEauthorblockN{Akarsh Yadav,~\IEEEmembership{Graduate Student Member, IEEE}, Ly V. Nguyen,~\IEEEmembership{Member, IEEE}, James V. Krogmeier,~\IEEEmembership{Senior Member, IEEE}, and David J. Love,~\IEEEmembership{Fellow, IEEE} }}
\thanks{Akarsh Yadav, James V. Krogmeier, and David J. Love are with the Elmore Family School of Electrical and Computer Engineering, Purdue University, West Lafayette, IN 47907 USA (e-mails: {yadav111, jvk, djlove}@purdue.edu).}
\thanks{Ly V. Nguyen was with the Elmore Family School of Electrical and Computer Engineering, Purdue University, West Lafayette, IN 47907 USA. He is now with the Department of Electrical Engineering and Computer Science, University of Kansas, KS 66045 USA (e-mail: vanly.nguyen@ku.edu).}
\thanks{This work was supported in part by National Science Foundation under Grant EEC-1941529, Grant CNS-2212565, and Grant CNS-2225578.
}
}
\maketitle
\begin{abstract}
Near-field communication with large antenna arrays requires beamforming techniques that account for spherical wavefront effects. While predefined codebooks are widely used in multiple-antenna wireless systems, the performance of randomly generated codebooks in the near-field region has not been investigated. In this work, we study four randomly generated codebooks for near-field communications: random vector quantization (RVQ), random phase (RP), random far-field (RFF), and random near-field (RNF) codebooks. We use the known RVQ gain distribution as a benchmark. For RP, we derive a large-array bulk approximation and characterize the high-gain tail behavior of the gain distribution. For RNF, we characterize the high-gain tail behavior of the gain distribution. For RFF, we derive an analytical approximation for the maximum achievable beamforming gain when far-field beams are used to serve near-field users, along with the corresponding maximizing far-field direction. We also analyze the maximum-gain scaling with codebook size. Monte Carlo simulations verify the analytical results and show that RVQ and RP degrade significantly as the array size increases, while RNF achieves the most favorable high-gain scaling. RFF can outperform RVQ and RP for finite codebook sizes, but its maximum gain saturates below one due to the phase mismatch between the far-field beams and the near-field channel. These results show how codebook structure, array size, and codebook size affect random beamforming performance in the near-field.
\end{abstract}

\begin{IEEEkeywords}
Near-field, random codebooks, beamforming, codebook design, random vector quantization, random phase beamforming, far-field beamforming.
\end{IEEEkeywords}

\IEEEpeerreviewmaketitle

\section{Introduction}
Future wireless communication systems aim to support extremely high data rates through large antenna arrays and increased bandwidth \cite{10908559}. Large antenna arrays provide high beamforming gain, improved spatial resolution, and highly directional multi-user transmission \cite{8732419,10365224}. However, these benefits come with challenges due to the need for highly directional transmission using limited feedback-rates, sometimes in severe propagation conditions. Moreover, the small wavelengths in future communication systems allow for more antennas in a given aperture \cite{10558818}.

As the physical aperture increases, the propagation behavior changes. In particular, the Fraunhofer distance scales quadratically with the array aperture, enlarging the near-field region \cite{7942128}. Consequently, future wireless systems with very large arrays are expected to serve many users in the near-field region \cite{10496996}. In this region, the planar-wave model is no longer accurate. Instead, the channel must be modeled using a spherical wavefront that depends on both the user's angle and distance \cite{9903389,10098681}. Therefore, beamforming and codebook design must account for both angular and distance variations.

This near-field behavior changes the structure of beamforming codebooks. In conventional far-field systems, codebooks usually cover only the angular domain. For example, the discrete Fourier transform (DFT) codebook in a uniform linear array (ULA) provides far-field beams that uniformly sample the angular domain \cite{6600706}. Codebook-based beamforming is also used in practical standards. In 5G NR, CSI feedback relies on standardized Type-I and Type-II codebooks, where Type-I codebooks provide lower-overhead beam selection and Type-II codebooks provide higher-resolution channel representation at the cost of increased feedback overhead \cite{11346973}. Similarly, millimeter-wave (mmWave) Wi-Fi systems such as IEEE 802.15.3c implement a codebook-based beamforming protocol \cite{5936164}. However, in the near-field environment, the channel contains an additional distance-dependent phase term, causing phase mismatch when far-field codebooks are employed. This has motivated recent work on polar-domain and hierarchical near-field codebooks that jointly search over angle and distance \cite{9693928,6600706,11183591,10195974,10506674,10365224,10304223}.

Despite this progress, the performance of random codebooks in the near-field regime remains completely unexplored. In previous codebook designs, random codebooks have been used to study codebook performance across different quantization rates because they are simple to construct, analytically tractable, and useful as benchmarks for structured codebooks \cite{4100151,7842197,7279196}. However, near-field channels contain distance-dependent phase curvature, which changes how beams concentrate energy. Therefore, the impact of near-field propagation on the performance of random codebooks remains largely unknown.

In this work, we investigate the performance of random codebooks for near-field beamforming. We evaluate four classes of random codebooks: random vector quantization (RVQ), random phase (RP), random far-field (RFF), and random near-field (RNF) codebooks. RVQ draws beamforming vectors independently from the uniform distribution on the complex unit sphere and is widely used as a benchmark in limited-feedback beamforming \cite{4100151}. RP uses constant-modulus beamforming vectors with random antenna phases and has been studied for mmWave initial access and channel-parameter estimation \cite{7842197}. RFF consists of randomly selected far-field steering vectors and is motivated by opportunistic random beamforming in mmWave systems \cite{7279196}. Unlike RVQ, RP, and RFF, there is no prior work on RNF codebooks. We introduce RNF as a near-field random benchmark, where each codeword is generated by randomly sampling an angle-distance pair and forming the corresponding near-field steering vector. These codebooks represent different levels of channel-structure information: RVQ randomizes both antenna amplitudes and phases; RP maintains constant antenna amplitudes while randomizing the phases; RFF imposes a linear phase structure of far-field steering vectors; and RNF samples directly from the near-field steering vector set.

The main contributions of this work are summarized as follows:
\begin{itemize}
    \item We characterize the beamforming-gain distributions of RVQ, RP, RFF, and RNF codebooks in near-field channels and compare their limiting behavior as the codebook size increases.
    For the RP codebook, we derive two approximations for the beamforming-gain distribution. The first is a large-array bulk approximation based on the central limit theorem, while the second describes the probability of obtaining a beamforming gain close to one.
    \item For the RFF codebook, we derive an analytical approximation for the maximum gain achievable by far-field beams when serving a near-field user, along with the corresponding maximizing far-field beam direction. We also show that, even as the codebook size increases, RFF is limited by the residual phase mismatch between far-field beams and the near-field channel.
    \item For the RNF codebook, we characterize the probability of obtaining a beamforming gain close to one. We show that this probability decreases only linearly as the target gain approaches one.
    \item We compare the asymptotic maximum-gain scaling of the codebooks and show that RNF has the most favorable scaling, followed by RP and RVQ, while RFF is limited due to phase mismatch.
\end{itemize}
\textit{Notation:} Scalars are denoted by lower-case letters, vectors by bold lower-case letters, and matrices by bold upper-case letters. Sets are denoted by calligraphic letters. The superscripts $(\cdot)^{\mathsf T}$ and $(\cdot)^{\mathsf H}$ denote the transpose and the conjugate transpose, respectively. The Euclidean norm and squared Euclidean norm of a vector are denoted by $\lVert\cdot\rVert_2$ and $\lVert\cdot\rVert_2^2$, respectively, while $|\cdot|$ denotes the absolute value of a scalar. The operators $\mathbb E[\cdot]$ and $\Pr{\cdot}$ denote expectation and probability, respectively. The notation $\mathcal{CN}(\mathbf 0,\mathbf I_M)$ denotes a circularly symmetric complex Gaussian random vector with zero mean and covariance matrix $\mathbf I_M$, and $\mathrm{Unif}(\cdot)$ denotes the uniform distribution. The sets of integers and nonnegative integers are denoted by $\mathbb Z$ and $\mathbb N_0$, respectively. The notation $x\mapsto y$ is used to denote an index transformation or change of variable.
\section{System Model}\label{system_model_section}
\subsection{System Model}
We consider a single cell downlink system consisting of a base station (BS) and a single-antenna user. The BS is equipped with a ULA with $M$ antennas. Let $\mathbf{w} \in \mathbb{C}^{M}$ denote the transmit beamforming vector, with $\lVert\mathbf{w}\rVert_{2} = 1$. Then, the received signal at the user can be written as
\begin{align}\label{system_model}
    y = \mathbf{h}^{\mathsf{H}}\mathbf{w}s + \eta,
\end{align}
where $\mathbf{h} \in \mathbb{C}^M$ is the downlink channel vector, $s$ is the transmitted symbol with an average power of $P$, i.e., $E[\lvert s \rvert^2] = P$, and $\eta \sim \mathcal{CN}(0,\sigma^2)$ is additive white Gaussian noise.

Since the planar-wave approximation is inaccurate for users close to large antenna arrays, we use a spherical-wave propagation model \cite{10220205}. Accordingly, we consider a near-field scenario and model the channel vector as
\begin{align}
    \mathbf{h}(\theta,r) = \alpha \mathbf{a}(\theta,r),
\end{align}
where $\alpha$ is the complex path gain and $\mathbf{a}(\theta,r)$ is the near-field steering vector for a user located at a normalized direction $\theta$ and distance $r$.

Let $\lambda$ denote the carrier wavelength and $d = \frac{\lambda}{2}$ denote the antenna spacing. The ULA is placed along the X-axis, with the first antenna at the origin and the $n$-th antenna at $x_n = nd$ for $n =0,1,\ldots,M-1$. The user distance is measured from the first antenna element. If the user is located at a distance $r$ and in a normalized direction $\theta$ with respect to the array broadside\footnote{If $\vartheta$ denotes the physical angle measured from the array broadside, then $\theta = \sin(\vartheta)$. Thus, $\theta =0$ corresponds to broadside, and $\theta = \pm 1$ corresponds to directions parallel to the array axis.}, then the distance from the user to the $n$th antenna is \cite{goodman1969introduction,balanis2016antenna}
\begin{align}
    r^{(n)}(\theta,r) &= \sqrt{r^2 + n^2d^2 - 2r\theta nd} \\
    &= r - nd\theta + \frac{n^2d^2}{2}\frac{1-\theta^2}{r} + \mathcal{O}\left(\frac{1}{r^2}\right), \label{r_approx}
\end{align}
for $n =0,1,\ldots,M-1$. Throughout the paper, we write $r^{(n)}(\theta,r)$ as $r^{(n)}$ when the dependence on $(\theta,r)$ is clear. The approximation in \eqref{r_approx} follows from a second-order expansion of the exact distance expression and highlights the two main components of the near-field phase profile: a linear term in $n$, which is also present in the far-field regime, and a quadratic term in $n$, which is unique to near-field propagation.

Using this distance expression, the corresponding near-field steering vector is given by
\begin{align}
    \mathbf{a}^{}(\theta_{},r_{}) = \frac{1}{\sqrt{M}} \left[1,\,e^{-j \frac{2\pi}{\lambda}\left(r^{(1)} - r\right) },\,\ldots,\,e^{-j \frac{2\pi}{\lambda}\left(r^{(M-1)}-r\right)} \right]^{\mathsf{T}}.
\end{align}
The factor $\frac{1}{\sqrt{M}}$ ensures that $\lVert \mathbf{a}(\theta,r) \rVert_{2}^2 =1$, independent of the array size. The higher order term in \eqref{r_approx} can be neglected when the user is in the radiating near-field, i.e., $r > r_{\min}$, where $r_{\min} = 0.62 \sqrt{\frac{D^3}{\lambda}}$\cite{balanis2016antenna}. Here, $D$ represents the array aperture. For large arrays, we use $D \triangleq Md \approx (M-1)d \approx (M+1)d$ when deriving the large-array approximations \cite{10002944}.

Let the actual user location be $(\theta_t,r_t)$, where $\theta_t$ and $r_t$ represent the user's normalized direction and distance, respectively. Then, we model the corresponding channel vector as 
\begin{align}
    \mathbf{h}(\theta_t,r_t) = \frac{\alpha}{\sqrt{M}}\left[ e^{-j \varphi_{0}(\theta_t,r_t)},\,\ldots\,,\,e^{-j \varphi_{M-1}(\theta_t,r_t) }\right]^{\mathsf{T}},    
\end{align}
where the phase of the $n$-th entry of the array response is given by
\begin{align}
    \varphi_{n}(\theta_t,r_t) = \frac{2\pi}{\lambda}\left(-nd \theta_t + \frac{n^2 d^2}{2}\frac{1-\theta_t^2}{r_t} \right).
\end{align}

We assume that the user location is random but confined to a known region $\theta_t \in [\theta_{\min},\theta_{\max}]$ and $r_t \in [r_{\min},r_{\max}]$. We assume that these bounds are known to the transmitter. Within this region, the normalized direction $\theta_t$ is uniformly distributed, and its probability density function (PDF) is given by 
\begin{align}
    f_{\Theta}(\theta_t) &= \frac{1}{\theta_{\text{max}} - \theta_{\text{min}}}, \qquad \theta_{\min} \leq \theta_t \leq \theta_{\max},
\end{align}
and zero otherwise.
The user distance $r_t$ follows a linear distribution over $[r_{\min}, r_{\max}]$, corresponding to a uniform spatial distribution across the considered polar region. Therefore, the PDF of $r_t$ is given by
\begin{align}
    f_{R}(r_t) &= \frac{2r_t}{ r_{\max}^2 - r_{\min}^2}, \qquad r_{\min} \leq r_t \leq r_{\max},
\end{align}
and zero otherwise.
\subsection{Codebook Design}\label{section_codebook_design}
We consider a codebook $\mathcal{F}$ containing $K$ codewords, where each codeword is generated randomly according to a prescribed distribution. The codebook is written as $\mathcal{F} = \{ \mathbf{w}_{1}, \mathbf{w}_{2}, \cdots , \mathbf{w}_{K} \}$,
where each codeword satisfies the unit-norm constraint. The codewords are assumed to be generated independently from a common distribution, so the statistical behavior of the codebook is determined by the distribution of a single randomly-generated beamforming vector.

For a user located at $(\theta_t,r_t)$, the channel is $\mbf{h}(\theta_t,r_t) = \alpha \mbf{a}(\theta_t,r_t)$, where $\mbf{a}(\theta_t,r_t)$ is the unit-norm near-field array response. The received power associated with codeword $\mathbf{w}_{k}$ is proportional to
\begin{align}
    \lvert\mathbf{h}(\theta_t,r_t)^{\mathsf{H}}\mathbf{w}_k\rvert^2
    = \lvert\alpha\rvert^2 \lvert\mathbf{a}(\theta_t,r_t)^{\mathsf{H}}\mathbf{w}_k\rvert^2.
\end{align}
Since $\alpha$ is common to all codewords for a fixed user location, it does not affect their relative comparison. Hence, we use the normalized beamforming gain
\begin{align}
    \nu_k = \lvert \mathbf{a}(\theta_t,r_t)^{\mathsf{H}} \mathbf{w}_{k} \rvert^2, \quad k=1,2,\ldots,K,
\end{align}
as a performance metric for comparing different codebooks. Since both $\mathbf{a}(\theta_t,r_t)$ and $\mathbf{w}_{k}$ are unit-norm, it follows that $0 \leq \nu_k \leq 1$.

For a fixed user location, the random variables $\{\nu_k\}_{k=1}^{K}$ are independent and identically distributed (i.i.d.) realizations of a common random variable $\nu$. Thus, the performance of a random codebook can be characterized through the cumulative distribution function (CDF)
\begin{align}
    F_{\nu}(\gamma) = \Pr\{ \nu \leq \gamma \}, 
    \quad 0 \leq \gamma \leq 1.
\end{align}

The relevant performance metric for the entire codebook is the maximum beamforming gain. Since $\{\nu_{k}\}_{k=1}^{K}$ are i.i.d., by order statistics, 
\begin{align}\label{order_statistics}
    \Pr\left\{ \max_{1 \leq k \leq K} \nu_{k} \leq \gamma \right\} = \left( F_{\nu}(\gamma) \right)^{K}.
\end{align} 
If $F_{\nu}(\gamma) < 1$ for every $\gamma < 1$, then
\begin{align}
    \lim_{K \to \infty} \Pr\left\{ \max_{1 \leq k \leq K} \nu_{k} \leq \gamma \right\} = 0, \quad 0 \leq \gamma < 1,
\end{align}
which implies that $\max_{1 \leq k \leq K} \nu_{k}$ converges in probability to one as $K \to \infty$.

The rate of this convergence is determined by the high-gain tail of the single codeword CDF. Let $\epsilon = 1 - \gamma$, and
\begin{align}
p_{M}(\epsilon) \triangleq \Pr\{ \nu \geq 1 - \epsilon \} = 1 - F_{\nu}(1 - \epsilon).
\end{align}
Then,
\begin{align}
    \Pr\left\{ \max_{1 \leq k \leq K} \nu_{k} \leq 1 - \epsilon \right\} = \left( 1 - p_{M}(\epsilon) \right)^K.
\end{align}
For small $p_{M}(\epsilon)$, we have $\left( 1 - p_{M}(\epsilon) \right)^K \approx e^{-K p_{M}(\epsilon)}$. Thus, for a target gap $\epsilon \ll 1$, the probability that the maximum gain exceeds $1 - \epsilon$ approaches one when $K p_{M}(\epsilon) \to \infty$.
Therefore, the right-tail behavior of $F_{\nu}(\gamma)$ determines how large the codebook must be to obtain near-unit beamforming gain.
\section{Performance Analysis of Random Codebooks}\label{random_codebooks}
\subsection{Random Vector Quantization (RVQ)}\label{RVQ}
We first consider the RVQ codebook \cite{1290092}, where each codeword is generated independently and isotropically over the complex unit sphere in $\mathbb{C}^{M}$. Equivalently, each RVQ codeword can be constructed by first selecting an unnormalized random vector $\mathbf v_k \sim \mathcal{CN}(\mathbf 0, \mathbf I_M)$,
and then normalizing it as $\mathbf w_k = \frac{\mathbf v_k}{\lVert \mathbf v_k \rVert_{2}}$. Since the complex Gaussian vector $\mathbf v_k$ is isotropically distributed, the normalized vector $\mathbf w_k$ is uniformly distributed on the complex unit sphere. The RVQ codebook is then given by $\mathcal{F} = \{ \mathbf{w}_1, \mathbf{w}_2,\ldots,\mathbf{w}_{K} \}$, where the $K$ codewords are generated independently.

RVQ has been widely studied in the beamforming and limited-feedback literature due to its analytical tractability and strong asymptotic properties \cite{1365326}. In particular, Santipach and Honig showed that independently generated RVQ codebooks are asymptotically optimal for beamforming in the large-system regime\cite{4787627}. Therefore, RVQ is commonly used as a benchmark for evaluating structured codebooks.

For the considered model, let $\nu = \lvert \mathbf{a}(\theta_t,r_t)^{\mathsf{H}} \mathbf{w} \rvert^2$, where $\mathbf{w}$ is an RVQ codeword. Since $\mbf{w}$ is isotropically distributed , the distribution of $\nu$ is independent of $\mathbf{a}(\theta_t,r_t)$. The CDF of $\nu$ is given by \cite{4100151} 
\begin{align}\label{cdf_rvq}
    F_{\nu}(\gamma) = 1 - (1-\gamma)^{M-1}, \quad 0 \leq \gamma \leq 1.
\end{align}
 
By standard results from order statistics, the CDF of the maximum beamforming gain, $\max_{1 \leq k \leq K} \nu_{k}$, is given by 
\begin{align}
    \text{Pr}\{ \max_{1 \leq k \leq K} \nu_k \leq \gamma \} = ( 1 - (1-\gamma)^{M-1} )^{K}.
\end{align}
For a fixed $M$ and any $\gamma \in [0,1)$, we have $\lim_{K \to \infty} \text{Pr}\{ \max_{1 \leq k \leq K} \nu_k \leq \gamma \} = 0$.
Thus, for fixed $M$, the maximum beamforming gain converges in probability to one as the codebook size increases, i.e., 
\begin{align}
    \max_{1 \leq k \leq K} \nu_{k} \xrightarrow[K \to \infty]{\mathsf{p}} 1.
\end{align}
\subsection{Random Phase (RP)}
We next consider the RP codebook. Unlike RVQ, whose codewords have unequal magnitudes across the antenna elements \cite{4787627}, the RP codebook imposes a constant-modulus structure on every beamforming vector \cite{7842197}. 
This is well matched to the near-field steering vector $\mathbf{a}(\theta_t,r_t)$, whose entries have equal magnitude $1/\sqrt{M}$, with the user location affecting only the phase profile.

In the RP codebook, the codebook $\mathcal{F}$ is constructed by generating each of the $K$ beamforming vectors independently, where the phase of each antenna is drawn independently and uniformly over $[0,2\pi)$. Specifically, the $k$-th codeword is given by
\begin{align}
    \mathbf{w}_k = \frac{1}{\sqrt{M}}\left[ e^{j\phi_{0,k}},e^{j\phi_{1,k}},\,\ldots,\,e^{j\phi_{M-1,k}} \right]^{\mathsf{T}},
\end{align}
where $\phi_{n,k} \sim \textrm{i.i.d.  Unif}[0,2\pi), \forall n,k$.

For a user with a near-field array response $\mbf{a}(\theta_t,r_t)$, the beamforming gain associated with the \(k\)th RP codeword is
\begin{align}\label{define_gain_RP}
    \nu_k
    = \lvert\mathbf{a}(\theta_t,r_t)^{\mathsf{H}}\mathbf{w}_k\rvert^2
    = \frac{1}{M^2}\left\lvert\sum_{n=0}^{M-1} e^{j\delta_{n,k} }\right\rvert^2,
\end{align}
where $\delta_{n,k} = \varphi_n - \phi_{n,k}$. Since $\phi_{n,k}$ is uniformly distributed over $[0,2\pi)$, the phase difference $\delta_{n,k}$ is also uniformly distributed over $[0,2\pi)$. Hence, for any fixed user location $(\theta_t,r_t)$, the distribution of $\nu_{k}$ is independent of the particular near-field phase profile and is determined only by the coherent sum of $M$ independent unit-magnitude phasors.

The exact distribution of $\nu_k$ is more difficult to characterize than in the RVQ case because $\nu_k$ depends on the squared magnitude of a sum of independent unit phasors. Nevertheless, useful analytical expressions can be obtained in two important regimes. The first is the large-array regime, where the real and imaginary parts of the sum can be approximated as Gaussian random variables using the central limit theorem. This regime captures the bulk behavior of the RP gain distribution. The second is the right-tail regime near $\nu_k = 1$, which corresponds to the rare events where the random phases are nearly aligned. The right-tail regime is especially important when the codebook size is large. We summarize these two regimes in the following lemmas.
\begin{lemma}\label{lemma_-1}
    For an RP codeword $\mbf{w}_{k}$ and large $M$, the scaled gain $M\nu_{k}$ converges in distribution to an exponential random variable with unit rate. Therefore, the CDF of each $\nu_{k}$ satisfies
    \begin{align}
        F_{\nu}(\gamma) = 1 - e^{-M \gamma} + \mathcal{O}\left(\frac{1}{\sqrt{M}}\right).
    \end{align}
\end{lemma}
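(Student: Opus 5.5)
Since the $\delta_{n,k}$ are i.i.d.\ $\mathrm{Unif}[0,2\pi)$, I would write $S_M = \sum_{n=0}^{M-1} e^{j\delta_{n,k}} = X_M + jY_M$, where $X_M = \sum_n \cos\delta_{n,k}$ and $Y_M = \sum_n \sin\delta_{n,k}$. By \eqref{define_gain_RP}, $M\nu_k = \frac{1}{M}\lvert S_M\rvert^2 = (X_M/\sqrt{M})^2 + (Y_M/\sqrt{M})^2$. Each summand vector $(\cos\delta,\sin\delta)$ has zero mean and covariance $\frac{1}{2}\mathbf I_2$, because $\mathbb E[\cos^2\delta] = \mathbb E[\sin^2\delta] = \tfrac12$ and $\mathbb E[\cos\delta\sin\delta] = 0$. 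It also has bounded third moments, since the vectors lie on the unit circle. The multivariate central limit theorem then gives $(X_M,Y_M)/\sqrt{M} \xrightarrow{d} (G_1,G_2)$ with $G_i$ i.i.d.\ $\mathcal N(0,\tfrac12)$. The continuous mapping theorem applied to $(x,y)\mapsto x^2+y^2$ yields $M\nu_k \xrightarrow{d} G_1^2+G_2^2$. This limit is $\tfrac12\chi^2_2$, i.e.\ exponential with unit rate, which proves the first claim.

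For the CDF statement, $F_\nu(\gamma) = \Pr\{M\nu_k \le M\gamma\} = \Pr\{(X_M,Y_M)/\sqrt{M} \in B(\sqrt{M\gamma})\}$, where $B(\rho)$ is the closed disk of radius $\rho$. Under the Gaussian limit this probability equals $1-e^{-\rho^2} = 1-e^{-M\gamma}$. To control the error uniformly in $\gamma$, I would invoke the multivariate Berry--Esseen theorem over convex sets (Bentkus' bound). For i.i.d.\ summands with identity-normalized covariance it gives $\sup_{C\ \text{convex}} \lvert \Pr\{\cdot\in C\} - \Pr\{G\in C\}\rvert \le c\, d^{1/4}\,\mathbb E\lVert Z\rVert^3/\sqrt{M}$, with $d=2$. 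After rescaling by $\sqrt2$ the summands satisfy $\lVert Z\rVert = \sqrt2$, so the third moment is a constant. Disks are convex, so $F_\nu(\gamma) = 1-e^{-M\gamma} + \mathcal O(1/\sqrt M)$ uniformly in $\gamma\in[0,1]$.

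The main obstacle is obtaining a uniform rate rather than only pointwise convergence. The citation to a Berry--Esseen bound over convex sets handles this directly. Alternatively, one could use the exact rotational invariance of the distribution of $S_M$ together with the classical Kluyver/Pearson random-walk formula $F_\nu(\gamma) = \sqrt{M\gamma}\int_0^\infty J_1(\sqrt{M\gamma}\,t)J_0(t)^M\,dt$. One would then compare $J_0(t)^M$ with $e^{-Mt^2/4}$ via an Edgeworth-type expansion, which shows the correction is in fact $\mathcal O(1/M)$. The weaker $\mathcal O(1/\sqrt M)$ claimed in the statement suffices, so I would take the Berry--Esseen route.

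Finally, I would note that the approximation is informative only in the bulk, $\gamma = \mathcal O(1/M)$. For $\gamma$ near one the additive error dominates the true tail, which motivates the separate right-tail analysis announced in the text.
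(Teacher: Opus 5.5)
Your proof is correct and follows the paper's route: a CLT for the real and imaginary parts, each with variance $\tfrac12$ and zero correlation, then the identification $\tfrac12\chi^2_2=\mathrm{Exp}(1)$. Your use of the Berry--Esseen bound over convex sets (disks are convex) supplies the uniformity needed to pass the $\mathcal O(1/\sqrt M)$ rate through $(x,y)\mapsto x^2+y^2$, which the paper leaves implicit.

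One small slip is in your unused aside. The event $\{\nu\le\gamma\}$ is $\{\lvert S_M\rvert\le M\sqrt{\gamma}\}$, so Kluyver's formula should read $M\sqrt{\gamma}\int_0^\infty J_1(M\sqrt{\gamma}\,t)J_0(t)^M\,dt$. With your radius $\sqrt{M\gamma}$, the comparison with $e^{-Mt^2/4}$ would give $1-e^{-\gamma}$ instead of $1-e^{-M\gamma}$.
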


\begin{proof}
    See Appendix \ref{appendix_-1}.
\end{proof}
\begin{lemma}\label{lemma_rp_high_gain}
    For $\gamma \in [1 - \epsilon,1]$ with $M^2\epsilon \ll 1$, the CDF of the RP beamforming gain satisfies 
    \begin{align}
        F_{\nu}(\gamma) = 1 - C_{M}(1-\gamma)^{\frac{M-1}{2}} + \mathcal{O}((1-\gamma)^{\frac{M+1}{2}}),
    \end{align}
    where $C_{M} = \frac{M^{\frac{M}{2}}}{2^{M-1} \pi^{\frac{M-1}{2}} \Gamma\left(\frac{M+1}{2}\right)}$ is a constant depending only on $M$ and $\Gamma(\cdot)$, which is the Gamma function.
\end{lemma}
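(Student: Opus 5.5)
We will compute the right tail $p_M(\epsilon)=\Pr\{\nu\ge 1-\epsilon\}=1-F_\nu(1-\epsilon)$ directly as the volume of a small region in phase space. From \eqref{define_gain_RP}, $\nu_k$ depends only on the i.i.d. uniform phases $\delta_{n,k}$, and it is invariant under a common rotation $\delta_n\mapsto\delta_n+c$. We may therefore condition on $\delta_0=0$ (mod $2\pi$). The remaining $M-1$ phases are still i.i.d. $\mathrm{Unif}[0,2\pi)$, so the vector $\mathbf x=(\delta_1,\ldots,\delta_{M-1})$, taken modulo $2\pi$ and viewed in $(-\pi,\pi]^{M-1}$, has the constant density $(2\pi)^{-(M-1)}$. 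The event $\nu\ge 1-\epsilon$ with $\epsilon$ small forces all phasors to be nearly aligned. We will therefore restrict attention to a small neighborhood of $\mathbf x=\mathbf 0$, after first checking that no other near-alignment configuration exists: the only maxima of $\nu$ are the global alignments, which correspond to $\mathbf x=\mathbf 0$ modulo $2\pi$.

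The key identity is
\begin{align}
\Big|\sum_{n=0}^{M-1} e^{j\delta_n}\Big|^2 = M^2-\sum_{n<m} 2\big(1-\cos(\delta_n-\delta_m)\big).
\end{align}
Expanding $2(1-\cos u)=u^2-u^4/12+\cdots$ and using $\sum_{n<m}(\delta_n-\delta_m)^2=M\sum_n(\delta_n-\bar\delta)^2$, where $\bar\delta$ is the mean phase, the condition $\nu\ge1-\epsilon$ becomes, to leading order,
\begin{align}
Q(\mathbf x)\triangleq\sum_{n=0}^{M-1}(\delta_n-\bar\delta)^2 \le M\epsilon .
\end{align}
With $\delta_0=0$, this is $Q(\mathbf x)=\mathbf x^{\mathsf T}\mathbf A\mathbf x$ with $\mathbf A=\mathbf I_{M-1}-\frac1M\mathbf 1\mathbf 1^{\mathsf T}$. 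This matrix has eigenvalues $1$ (multiplicity $M-2$) and $1/M$, so $\det\mathbf A=1/M$. The region $\{Q\le M\epsilon\}$ is thus an ellipsoid in $\mathbb R^{M-1}$ with volume
\begin{align}
\frac{\pi^{\frac{M-1}{2}}(M\epsilon)^{\frac{M-1}{2}}}{\Gamma\!\left(\frac{M+1}{2}\right)\sqrt{\det\mathbf A}}
=\frac{\pi^{\frac{M-1}{2}}M^{\frac M2}\epsilon^{\frac{M-1}{2}}}{\Gamma\!\left(\frac{M+1}{2}\right)}.
\end{align}
Multiplying by the density $(2\pi)^{-(M-1)}$ gives exactly $C_M\epsilon^{\frac{M-1}{2}}$, and setting $\epsilon=1-\gamma$ yields the leading term of the lemma.

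The main obstacle is controlling the error made by dropping the quartic and higher terms of the cosine expansion, and this is where the hypothesis $M^2\epsilon\ll1$ enters. On the ellipsoid, the smallest eigenvalue $1/M$ of $\mathbf A$ allows $\|\mathbf x\|^2$ to reach order $M^2\epsilon$. Hence every pairwise phase difference satisfies $|\delta_n-\delta_m|\lesssim M\sqrt\epsilon\ll1$. This keeps the region inside a single fundamental cell, so there is no wraparound, and it makes the Taylor expansion valid uniformly. The neglected quartic terms satisfy
\begin{align}
\sum_{n<m}(\delta_n-\delta_m)^4\le \max_{n,m}(\delta_n-\delta_m)^2\, M Q .
\end{align}
Consequently, the exact region $\{\nu\ge1-\epsilon\}$ is sandwiched between the ellipsoids $\{Q\le M\epsilon(1\pm c\,\epsilon)\}$ for some constant $c=c_M$; this is the step to verify carefully. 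The two ellipsoid volumes differ from the leading term by a factor $(1\pm c\,\epsilon)^{\frac{M-1}{2}}=1+\mathcal O(\epsilon)$. This gives $p_M(\epsilon)=C_M\epsilon^{\frac{M-1}{2}}+\mathcal O(\epsilon^{\frac{M+1}{2}})$, which is the claimed expansion of $F_\nu(\gamma)=1-p_M(1-\gamma)$.
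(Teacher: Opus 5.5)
Your proposal is correct and follows the same skeleton as the paper's proof: a second-order expansion of the pairwise cosine terms, the identity $\sum_{n<m}(\delta_n-\delta_m)^2=M\sum_n(\delta_n-\bar\delta)^2$, an $(M-1)$-dimensional volume computation, and a sandwich argument. It differs in two places.

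First, you remove the common-phase degree of freedom by conditioning on $\delta_0=0$ and computing the volume of the ellipsoid $\mathbf x^{\mathsf T}\mathbf A\mathbf x\le M\epsilon$ in $\mathbb R^{M-1}$, where $\sqrt M$ enters as $1/\sqrt{\det\mathbf A}$. The paper instead centers the phase vector, takes a ball of radius $\sqrt{M\epsilon}$ in the hyperplane $\mathbf 1^{\mathsf T}\mathbf x=0$, and multiplies by the length $2\pi\sqrt M$ of the diagonal circle on the torus. The constants agree. Your version needs only Fubini and a linear change of variables rather than a tube-volume argument.

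Second, your error control is quantitative. The paper's $(\tfrac12\pm\rho)$ sandwich with $\rho\to0$ by itself gives only $\Pr\{\nu\ge1-\epsilon\}\sim C_M\epsilon^{\frac{M-1}{2}}$; the $\mathcal O(\epsilon^{\frac{M+1}{2}})$ remainder is then asserted briefly through $r_\epsilon=\sqrt{M\epsilon}(1+\mathcal O(\epsilon))$. Your bounds $u^2-u^4/12\le 2(1-\cos u)\le u^2$ instead give the exact inner inclusion $\{Q\le M\epsilon\}\subseteq\{\nu\ge1-\epsilon\}$ and an outer ellipsoid inflated by $1+c_M\epsilon$. Together these actually prove the stated remainder.

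One step should be tightened. For the outer inclusion you need the phases to be localized on the event $\{\nu\ge1-\epsilon\}$ itself. Your no-wraparound argument is carried out on the ellipsoid, so it only covers the inner inclusion. Do it as the paper does, using nonnegativity of every summand: pairing each $n$ with $\delta_0=0$ gives $2(1-\cos\delta_n)\le M^2\epsilon$. Hence the representative of $\delta_n$ in $(-\pi,\pi]$ is $\mathcal O(M\sqrt\epsilon)$, and only then does the Taylor bound apply.

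A refinement: bound $\max_{n,m}(\delta_n-\delta_m)^2\le 2Q$ instead of by a multiple of $M^2\epsilon$. Then $c_M=\mathcal O(M)$, so the relative correction $(1+c_M\epsilon)^{\frac{M-1}{2}}-1$ is $\mathcal O(M^2\epsilon)$, which is exactly where the hypothesis $M^2\epsilon\ll1$ is needed. With your looser bound the correction is $\mathcal O(M^3\epsilon)$, which is still harmless for fixed $M$.
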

\begin{proof}
    See Appendix \ref{appendix_rp_highgain}.
\end{proof}
Lemma \ref{lemma_-1} provides a large-$M$ approximation for the distribution of the RP beamforming gain, obtained using the Berry–Esseen theorem\cite{vershynin2026friendly}. It accurately captures the bulk behavior of the RP gain distribution for thresholds away from the upper endpoint, $\gamma \in [0,1-\epsilon]$. However, this approximation becomes less accurate near the upper end of the support because the true gain satisfies $\nu \in [0,1]$, whereas the exponential approximation does not explicitly enforce this bounded support. In particular, for finite $M$, the exponential approximation does not satisfy $F_{\nu}(1) = 1$.

To address this limitation, Lemma \ref{lemma_rp_high_gain} characterizes the right-tail behavior of the distribution when the gain is close to its maximum possible value, i.e., $\gamma \in [1-\epsilon, 1]$. The resulting polynomial approximation is particularly useful in the high-gain regime because it follows the support $[0,1]$, remains accurate near $\gamma = 1$, and does not rely on $M$ being large. This regime is important for large codebooks since the maximum gain is governed by rare high-gain events.

For the RP codebook, the CDF of the maximum beamforming gain over $K$ independently generated codewords is
\begin{align}
    \Pr\left\{ \max_{1 \leq k \leq K} \nu_k \leq \gamma \right\} = \left(F_{\nu}(\gamma)\right)^K.
\end{align}
Since the RP gain satisfies $0 \leq \nu_{k} \leq 1$, we have $F_{\nu}(1) = 1$. Moreover, because the phases are random, for any fixed $\gamma \in [0,1)$, there is a non-zero probability that an RP codeword achieves a gain larger than $\gamma$. Since $F_{\nu}(\gamma) < 1$ for every fixed $\gamma < 1$, increasing $K$ decreases $F_{\nu}(\gamma)^K$. Equivalently, a large codebook provides more independent phase realizations, increasing the probability that at least one codeword achieves a gain larger than $\gamma$. Therefore,
\begin{align}
    \lim_{K \to \infty}
    \Pr\left\{ \max_{1\leq k\leq K} \nu_k \leq \gamma \right\}
    = 0.
\end{align}
Hence, for fixed $M$, the maximum beamforming gain converges in probability to one as $K \to \infty$, i.e.,
\begin{align}
    \max_{1\leq k\leq K} \nu_k
    \xrightarrow[K\to\infty]{\mathsf{p}} 1.
\end{align}
\subsection{Random Far-Field (RFF) Codebook}\label{RFF}
We next consider the RFF codebook, whose codewords are randomly selected far-field steering vectors. Unlike the RP codebook, where antenna phases are generated independently, each RFF codeword follows the structured linear phase progression of a far-field beam. Thus, RFF provides a useful benchmark for evaluating how conventional far-field beams perform when used to serve near-field users. 
Although a far-field beam can match the linear phase component of the near-field channel, it cannot eliminate the quadratic phase mismatch associated with finite-distance propagation.

In the RFF codebook, each beamforming vector is generated by selecting a random far-field direction and forming the corresponding far-field steering vector. Specifically, the $k$-th codeword is given by
\begin{align}
    \mathbf{w}_k = \mathbf{a}(\theta_k,\infty),
\end{align}
where $\theta_k$ is drawn independently from a uniform distribution over $[-1,1]$. Hence, the codebook consists of $K$ independent far-field beams pointing in random directions over the normalized angular domain.

For a user located at $(\theta_t,r_t)$, the beamforming gain achieved by the $k$-th codeword is
\begin{align}\label{Y_k_original}
    \nu_k
    = \left\lvert
    \frac{1}{M}
    \sum_{n=0}^{M-1}
    e^{j\frac{2\pi}{\lambda}\left(
    -nd\theta_t + \frac{n^2d^2}{2}\frac{1-\theta_t^2}{r_t}
    \right)}
    e^{-j\frac{2\pi}{\lambda}(-nd\theta_k)}
    \right\rvert^2.
\end{align}
This expression makes the source of the mismatch clear. The far-field codeword contributes only a linear phase progression in $n$, whereas the near-field channel contains both linear and quadratic phase terms. To simplify the notation, define
\begin{align}
    \psi_n(\theta_k,\theta_t,r_t)
    \triangleq
    \frac{2\pi}{\lambda}
    \left(
    nd(\theta_k-\theta_t)
    + \frac{n^2d^2}{2}\frac{1-\theta_t^2}{r_t}
    \right).
\end{align}
Then, the gain can be written as\footnote{We use $\nu_k$ and $\nu_k(\theta_k,\theta_t, r_t)$ interchangeably in Section \ref{RFF}.}
\begin{align}\label{lemma_eqn}
    \nu_k(\theta_k,\theta_t,r_t)
    = \frac{1}{M^2}
    \sum_{n=0}^{M-1}\sum_{m=0}^{M-1}
    e^{j( \psi_n(\theta_k,\theta_t,r_t) - \psi_m(\theta_k,\theta_t,r_t) ) }.
\end{align}
Let $ \ell = n - m$, we obtain
\begin{align}
    \psi_n(\theta_k,\theta_t,r_t) - \psi_m&(\theta_k,\theta_t,r_t) = \pi \ell (\theta_k-\theta_t) \notag \\
    &+ \pi (2m\ell+\ell^2) \frac{\lambda}{4}\frac{1-\theta_t^2}{r_t}.
\end{align}
This expression shows that the near-field effect enters the gain through the location-dependent \emph{quadratic phase curvature parameter}\footnote{We use $A$ and $A(\theta_t,r_t)$ interchangeably.} \begin{align}\label{define_A}
    A(\theta_t,r_t) \triangleq \frac{\lambda}{4}\frac{1-\theta_t^2}{r_t}.
\end{align}
The parameter $A(\theta_t,r_t)$ is dimensionless and captures the quadratic phase component of the near-field channel. Larger values of $A(\theta_t,r_t)$ correspond to stronger spherical-wave phase curvature across the array, while $A(\theta_t,r_t) \to 0$ recovers the far-field phase model. $A(\theta_t,r_t)$ is the largest near the array broadside, where $\theta_t \approx 0$, and it decreases as the user moves closer to the array-axis direction $\lvert \theta_t \rvert \approx 1$.

For users distributed over $\theta_t \in [\theta_{\min}, \theta_{\max}]$ and $r_t \in [ r_{\min}, r_{\max} ]$, the range of $A(\theta_t,r_t)$ is $A_{\min} \leq A(\theta_t,r_t) \leq A_{\max}$. Since $\theta_t \in [-1,1]$ and $r_t \in [r_{\min}, r_{\max}]$, we have $A_{\min} = 0$ and $A_{\max} = \frac{\lambda}{4 r_{\min}}$. Using $D = Md$, we can rewrite $r_{\min}$ as $r_{\min} = \frac{0.62 \lambda M}{ 2 } \sqrt{\frac{M}{2}}$. Substituting this into $A_{\max} = \lambda/(4 r_{\min})$ gives $A_{\max} = \frac{1}{1.24 M} \sqrt{\frac{2}{M}}$. Hence, 
\begin{align}\label{define_A_range}
    A(\theta_t,r_t) \in \left[0 , \frac{1}{1.24 M} \sqrt{\frac{2}{M}} \right].
\end{align}

\begin{lemma}\label{lemma_0}
The gain function $\nu_{k}$ can be expressed as
\begin{align}\label{nu_A_B}
        \nu_{k}(\theta_k,\theta_t,r_t) &= \frac{1}{M} + \frac{2}{M^2} \Bigg( \sum_{p = 1}^{M-1} \frac{\sin\big( \pi(M-p)A p \big)}{\sin( \pi A p )} \notag \\
        &\quad \cos \Big(\pi p \big( (M-1)A(\theta_t,r_t)+ \theta_k - \theta_t \big) \Big)  \Bigg).
\end{align}
\end{lemma}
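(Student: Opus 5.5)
Starting from the double-sum representation \eqref{lemma_eqn} together with the phase-difference formula in terms of $\ell = n-m$, I will regroup the $M^2$ terms by the lag $\ell$. The $M$ diagonal terms with $\ell=0$ each contribute $1$, which gives the leading $\frac{1}{M}$ after dividing by $M^2$. For each lag $p=|\ell|\in\{1,\ldots,M-1\}$ the terms with $\ell=p$ and $\ell=-p$ are complex conjugates of each other, since swapping $n$ and $m$ conjugates the summand. Hence the off-diagonal part equals $\frac{2}{M^2}\sum_{p=1}^{M-1}\mathrm{Re}\, S_p$, where
\begin{align*}
S_p \triangleq \sum_{m=0}^{M-1-p} e^{j\left(\pi p(\theta_k-\theta_t) + \pi(2mp+p^2)A\right)}
\end{align*}
and $A=A(\theta_t,r_t)$ is defined in \eqref{define_A}.

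Next I will evaluate $S_p$ in closed form. Pulling out the $m$-independent factor $e^{j\pi p(\theta_k-\theta_t)+j\pi p^2 A}$ leaves a geometric sum $\sum_{m=0}^{M-p-1} e^{j2\pi A p m}$. By the usual Dirichlet-kernel identity this equals
\begin{align*}
e^{j\pi A p (M-p-1)}\,\frac{\sin(\pi(M-p)Ap)}{\sin(\pi A p)}.
\end{align*}
The phase factors then combine as $\pi p(\theta_k-\theta_t) + \pi p^2 A + \pi A p(M-p-1) = \pi p\big((M-1)A + \theta_k-\theta_t\big)$. The $p^2$ terms cancel exactly, and this cancellation is the key bookkeeping point. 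Because the Dirichlet ratio is real, taking the real part produces the cosine factor, and \eqref{nu_A_B} follows.

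The main obstacle I expect is the degenerate case $\sin(\pi A p)=0$. Using the range \eqref{define_A_range}, the product $Ap$ is at most $(M-1)A_{\max}<1$, so the denominator vanishes only at $A=0$. In that case the ratio is interpreted by continuity as its limit $M-p$, which matches the direct count of terms in $S_p$. I will state this interpretation explicitly. I will also check the sign conventions, namely that $\psi_n-\psi_m$ carries $+\pi\ell(\theta_k-\theta_t)$ with $\lambda/2$ spacing, so that the argument of the cosine has the stated form.
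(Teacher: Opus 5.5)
Your proposal is correct and follows essentially the same route as the paper. Both regroup the double sum by the lag $\ell=n-m$, evaluate the inner sum as a geometric (Dirichlet-kernel) series so that the $p^2$ phase terms cancel, and combine the $\pm p$ lags into $2\cos(\cdot)$. The only differences are cosmetic: you obtain the negative-lag sum ($\zeta_1$ in the paper) from conjugate symmetry instead of computing a second geometric series, and you explicitly handle the removable singularity at $A=0$, which the paper leaves implicit.
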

\begin{proof}
    See Appendix \ref{appendix_0}.
\end{proof}
\subsubsection{Maximum achievable gain}
We now study the largest beamforming gain that can be achieved by any far-field beam for a user located at $(\theta_t,r_t)$. The exact maximum gain over all far-field directions is
\begin{align}
\nu_{\max}(\theta_t,r_t)
\triangleq
\max_{\theta_k \in [-1,1]}
\nu_k(\theta_k,\theta_t,r_t).
\end{align}
This quantity characterizes the best possible performance attainable by a far-field beam when serving a near-field user. However, $\nu_k(\theta_k,\theta_t,r_t)$ is a nonconvex trigonometric polynomial in $\theta_k$ and generally contains multiple stationary points. Therefore, obtaining a closed-form expression for all stationary points is analytically intractable. Instead, we characterize a symmetry-based family of approximate stationary points and solve a related problem: maximizing the gain over this candidate family.

Differentiating \eqref{nu_A_B} with respect to $\theta_k$ gives
\begin{align}\label{Y_k_derivative}
\frac{\partial \nu_k(\theta_k,\theta_t,r_t)}{\partial \theta_k}
&=
\frac{-2\pi}{M^2}
\sum_{p=1}^{M-1}
p
\frac{\sin(\pi(M-p)Ap)}{\sin(\pi Ap)}
\notag\\
& \mspace{-18mu}\mspace{-18mu}\mspace{-9mu} \times
\sin\Big(\pi p\big((M-1)A(\theta_t,r_t)+\theta_k-\theta_t\big)\Big).
\end{align}
Define
\begin{align}\label{define_g}
g_p(\theta_k, \theta_t, r_t) &\triangleq \sin\Big(\pi p\big((M-1)A(\theta_t,r_t)+\theta_k-\theta_t\big)\Big), \\
\label{define_f}
f_p(\theta_t, r_t) &\triangleq p\frac{\sin(\pi(M-p)Ap)}{\sin(\pi Ap)}.
\end{align}
Then,
\begin{align}\label{define_derivative}
\frac{\partial \nu_k(\theta_k,\theta_t,r_t)}{\partial \theta_k}
=
\frac{-2\pi}{M^2}
\sum_{p=1}^{M-1}
f_p(\theta_t, r_t) g_p(\theta_k, \theta_t, r_t).
\end{align}
\begin{lemma}\label{lemma_1}
For $M\gg1$, we have
\begin{align}\label{define_f_approximation}
f_p(\theta_t, r_t)
=
\frac{1}{\pi A}
\left(1+\mathcal O\left(\frac{1}{M}\right)\right)
\sin(\pi(M-p)Ap),
\end{align}
for all $p\in\{1,2,\ldots,M-1\}$. Thus, $f_p(\theta_t, r_t)$ is approximately even about $p=M/2$.
\end{lemma}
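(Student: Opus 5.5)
The plan is to approximate only the denominator $\sin(\pi A p)$ in the definition \eqref{define_f} of $f_p$, and to leave the numerator $\sin(\pi(M-p)Ap)$ exact. The key input is the range \eqref{define_A_range}, which gives $A \le A_{\max} = \frac{1}{1.24M}\sqrt{\frac{2}{M}}$. Hence, for every $p\in\{1,\ldots,M-1\}$,
\begin{align*}
0 \le \pi A p \le \pi A_{\max} (M-1) < \frac{\pi}{1.24}\sqrt{\frac{2}{M}} = \mathcal{O}\left(M^{-1/2}\right).
\end{align*}
So the argument of the denominator is uniformly small in $p$ as $M\to\infty$.

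The first step is to expand the denominator. Using $\sin x = x\,(1 - x^2/6 + \mathcal{O}(x^4))$ with $x=\pi A p$, we get
\begin{align*}
\frac{p}{\sin(\pi A p)} = \frac{1}{\pi A}\left(1 + \frac{(\pi A p)^2}{6} + \mathcal{O}\big((\pi A p)^4\big)\right).
\end{align*}
Since $(\pi A p)^2 \le (\pi A_{\max} M)^2 = \frac{2\pi^2}{1.24^2 M}$, the correction is $\mathcal{O}(1/M)$ uniformly in $p$. This uses the fact that $A_{\max}^2M^2 \propto 1/M$. Multiplying by $\sin(\pi(M-p)Ap)$ gives \eqref{define_f_approximation}. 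The case $A=0$ should be handled as the limit $A\to0^+$ and not as a literal substitution. In that limit $f_p\to p(M-p)$, and the right-hand side of \eqref{define_f_approximation} also tends to $(M-p)p\,(1+\mathcal O(1/M))$, so the statement stays consistent.

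The second step is the symmetry claim. The numerator depends on $p$ only through the product $(M-p)p$, which is invariant under $p\mapsto M-p$. Therefore $\frac{1}{\pi A}\sin(\pi(M-p)Ap)$ is exactly even about $p=M/2$. It follows that $f_p$ and $f_{M-p}$ agree up to the relative $\mathcal O(1/M)$ factor, and only the prefactor $p/\sin(\pi A p)$ breaks the symmetry, at that order.

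The main obstacle is making the $\mathcal O(1/M)$ uniform over all $p$, including $p$ close to $M$, where $\pi Ap$ is at its largest. The bound $\pi A p \le \pi A_{\max}M$ settles this, because the radiating-near-field condition $r_t\ge r_{\min}$ forces $A_{\max}M=\mathcal O(M^{-1/2})$. A further point is that the error is relative, i.e., multiplicative, and not additive. This means no issue arises when $\sin(\pi(M-p)Ap)$ is small, and the argument needs no lower bound on the numerator.
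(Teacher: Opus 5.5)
Your proposal is correct and follows the paper's proof essentially step for step. Both arguments expand $p/\sin(\pi A p)$ with the small-angle approximation, use $A\le A_{\max}$ to obtain the uniform bound $A^2p^2\le A_{\max}^2(M-1)^2=\mathcal O(1/M)$ over all $p$, and then use the exact invariance of $\sin(\pi(M-p)Ap)$ under $p\mapsto M-p$; your handling of the $A\to0^+$ limit and your note that the error is multiplicative are small refinements the paper leaves implicit.
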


\begin{proof}
See Appendix \ref{appendix_A}.
\end{proof}

Using Lemma \ref{lemma_1}, the derivative $\partial \nu_k/\partial\theta_k$ approximately vanishes when $g_p(\theta_k,\theta_t,r_t)$ is odd about $p=M/2$. This gives the following family of approximate stationary points.
\begin{lemma}\label{lemma_2}
Let $\theta_k^{(b)}$ denote a symmetry-based approximate stationary far-field direction of $\nu_k(\theta_k,\theta_t,r_t)$, where $b\in\mathbb Z$ is an integer index. For large $M$,
\begin{align}
\theta_k^{(b)}
=
\theta_t-(M-1)A(\theta_t,r_t)+\frac{2b}{M},
\qquad b\in\mathbb Z.
\end{align}
\end{lemma}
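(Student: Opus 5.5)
The plan is to start from the derivative representation \eqref{define_derivative} and use Lemma~\ref{lemma_1} to replace $f_p(\theta_t,r_t)$ by its large-$M$ form $\frac{1}{\pi A}\sin(\pi(M-p)Ap)\,(1+\mathcal O(1/M))$. The first step is to establish the approximate symmetry of $f_p$ about $p=M/2$. Under the reindexing $p\mapsto M-p$, the factor $\sin(\pi(M-p)Ap)$ is mapped to $\sin(\pi p A(M-p))$, which is the same quantity. Hence $f_p\approx f_{M-p}$, up to the $\mathcal O(1/M)$ relative error.

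Once $f_p$ is (approximately) even in this sense, the sum $\sum_p f_p g_p$ vanishes to leading order whenever $g_p$ is odd under $p\mapsto M-p$, that is, whenever $g_{M-p}=-g_p$. The idea is to pair the terms $p$ and $M-p$ and let them cancel. If $M$ is even, the middle term $p=M/2$ has to vanish separately, i.e.\ $g_{M/2}=0$; we will check that this is consistent with the condition derived below.

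Next we impose the oddness condition on $g_p$. Write $x\triangleq(M-1)A+\theta_k-\theta_t$, so that $g_p=\sin(\pi p x)$. Then
\[
g_{M-p}=\sin(\pi M x-\pi p x).
\]
This equals $-\sin(\pi p x)$ for all $p$ exactly when $\sin(\pi Mx-\pi px)=\sin(-\pi px)$ for all $p$, which holds if $\pi M x=2\pi b$ for some $b\in\mathbb Z$. This gives $x=2b/M$, and therefore
\[
\theta_k^{(b)}=\theta_t-(M-1)A+\frac{2b}{M}.
\]
At the midpoint we get $g_{M/2}=\sin(\pi b)=0$, so the middle term is also handled.

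It remains to control the residual. The derivative becomes
\[
\frac{-2\pi}{M^2}\sum_p \frac{1}{\pi A}\sin(\pi(M-p)Ap)\,g_p\cdot \mathcal O(1/M),
\]
and we need this to be small compared with the typical size of the derivative. Using $|g_p|\le1$ and $|f_p|\le p(M-p)\le M^2$, together with the fact that the leading paired terms cancel exactly, the residual is $\mathcal O(1/M)$ relative to the scale of $\partial\nu_k/\partial\theta_k$. This justifies calling $\theta_k^{(b)}$ an approximate stationary point.

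I expect this last step to be the main obstacle. The $\mathcal O(1/M)$ in Lemma~\ref{lemma_1} is multiplicative and need not be symmetric in $p$, so the bound on the residual has to be made in relative terms rather than by claiming exact cancellation. Making this precise uniformly over $A$ in the range \eqref{define_A_range} is the delicate part of the argument.
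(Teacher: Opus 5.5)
Your proposal follows the paper's proof essentially step for step: approximate evenness of $f_p$ from Lemma~\ref{lemma_1}, then the oddness requirement $g_{M-p}=-g_p$, which forces $(M-1)A+\theta_k-\theta_t=2b/M$, then a residual bound (your midpoint check $g_{M/2}=\sin(\pi b)=0$ is a detail the paper leaves implicit). The step you flag as delicate is handled in the paper by writing the correction additively, $p/\sin(\pi Ap)=\frac{1}{\pi A}+\mathcal O(Ap^2)$: the even part $\sin(\pi Ap(M-p))/(\pi A)$ then cancels exactly against the odd $g_p$, and the remainder, bounded using $|\sin(\cdot)|\le 1$ and $|g_p|\le 1$, gives $\partial\nu_k/\partial\theta_k=\mathcal O(AM)=\mathcal O(M^{-1/2})$ uniformly over $A\in[0,A_{\max}]$ --- a sharper absolute bound than the $\mathcal O(1)$ that your relative $\mathcal O(1/M)$ estimate amounts to.
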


\begin{proof}
See Appendix \ref{appendix_B}.
\end{proof}

The candidate directions $\theta_k^{(b)}$ are centered around a location-dependent direction. We define this \emph{location-dependent center direction}  as $\theta_c(\theta_t,r_t) \triangleq \theta_t-(M-1)A(\theta_t,r_t)$. Substituting $\theta_k^{(b)}$ into \eqref{nu_A_B} gives $\nu_k(\theta_k^{(b)},\theta_t,r_t) = \frac{1}{M} + \frac{2}{M^2} \sum_{p=1}^{M-1} \cos\left(\pi p\frac{2b}{M}\right) \frac{\sin(\pi(M-p)Ap)}{\sin(\pi Ap)}$.
Since $\cos(\cdot)$ is even, $\cos\left(\pi p\frac{2(-b)}{M}\right) =
\cos\left(\pi p\frac{2b}{M}\right)$, and therefore $\nu_k(\theta_k^{(-b)},\theta_t,r_t) = \nu_k(\theta_k^{(b)},\theta_t,r_t)$.
Thus, the candidates indexed by $b$ and $-b$ achieve the same gain. Hence, for the gain value, it is sufficient to consider only $b\in\mathbb N_0$.

For $b\in\mathbb N_0$, define the candidate gain
\begin{align}\label{gain_define_critical_points}
G_b(A)
\triangleq
\frac{1}{M}
+
\frac{2}{M^2}
\sum_{p=1}^{M-1}
\cos\left(\pi p\frac{2b}{M}\right)
\frac{\sin(\pi(M-p)Ap)}{\sin(\pi Ap)}.
\end{align}
We now define the candidate-direction approximation to the maximum gain as
$\nu_{\max}^{\mathrm{cand}}(\theta_t,r_t)
\triangleq
\max_{b\in\mathbb N_0}
G_b(A(\theta_t,r_t))$.
This quantity is not obtained by solving for all stationary points of the continuous function $\nu_k(\theta_k,\theta_t,r_t)$. Rather, it is obtained by maximizing the exact gain over the analytically characterized candidate directions $\theta_k^{(b)}$.

To determine which candidate direction gives the largest gain, we introduce a sequence $\{A_{\textrm{th},b}\}_{b\in\mathbb N_0}$ with $A_{\textrm{th},0}=0$, where the subscript ``th'' denotes a threshold point. To solve for $A_{\textrm{th},b}$, we treat $A$ as a scalar variable. Then, $A_{\textrm{th},b}$ is the smallest value of $A$ at which the gains achieved by the candidate directions $\theta_k^{(b-1)}$ and $\theta_k^{(b)}$ are equal. For each positive integer $b$, $A_{\textrm{th},b}$ is defined as the smallest positive solution of the following equation
\begin{align}\label{define_B}
&\sum_{p=1}^{M-1}
\cos\left(\pi p\frac{2(b-1)}{M}\right)
\frac{\sin(\pi(M-p)pA_{\textrm{th},b})}
{\sin(\pi pA_{\textrm{th},b})}
\notag \\
&\quad =
\sum_{p=1}^{M-1}
\cos\left(\pi p\frac{2b}{M}\right)
\frac{\sin(\pi(M-p)pA_{\textrm{th},b})}
{\sin(\pi pA_{\textrm{th},b})}.
\end{align}

For a given user location, $A(\theta_t,r_t)$ is first computed, and $b^\star$ is selected as the index of the threshold interval containing $A(\theta_t,r_t)$, i.e.,
$A_{\textrm{th},b^\star}
\leq
A(\theta_t,r_t)
\leq
A_{\textrm{th},b^\star+1}$.

\begin{lemma}\label{lemma_3}
For large $M$, the maximizing candidate directions are
\begin{align}
\theta_k^{(b^\star)}
=
\theta_t-(M-1)A(\theta_t,r_t)
\pm
\frac{2b^\star}{M},
\end{align}
and the corresponding candidate-direction approximation to the maximum gain is\footnotemark
\begin{align}\label{max_ff_cb}
\nu_{\max}^{\mathrm{cand}}(\theta_t,r_t) = G_{b^\star}(A(\theta_t,r_t)).
\end{align}
\end{lemma}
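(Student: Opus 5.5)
The proof has two parts. First, that $\pm$ both give the maximum follows from the $b\mapsto -b$ symmetry already noted. Second, we show that $b^\star$, defined through the threshold sequence $\{A_{\textrm{th},b}\}$, is the index that maximizes $G_b(A)$ over $b\in\mathbb N_0$. By Lemma \ref{lemma_2}, every candidate direction has the form $\theta_k^{(b)}=\theta_c(\theta_t,r_t)+2b/M$. Substituting into Lemma \ref{lemma_0} gives the candidate gain exactly as $G_b(A)$ in \eqref{gain_define_critical_points}. Since $\cos(\cdot)$ is even, $G_{-b}=G_b$. So once the maximizing index $b^\star\in\mathbb N_0$ is identified, both $\theta_c\pm 2b^\star/M$ attain $\nu_{\max}^{\mathrm{cand}}=G_{b^\star}(A)$. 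This is \eqref{max_ff_cb} and the stated directions.

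For the second part, I would first pass to the large-$M$ continuum form of $G_b$. Using Lemma \ref{lemma_1}, replace $\sin(\pi(M-p)Ap)/\sin(\pi Ap)$ by $\sin(\pi(M-p)Ap)/(\pi A p)$. Then write $p=Mx$ and convert the sum to a Riemann integral. This should give
\begin{align*}
G_b(A)\approx 2\int_0^1 \cos(2\pi b x)\,\frac{\sin(\pi M^2 A x(1-x))}{\pi M^2 A x}\,dx .
\end{align*}
This is a function of $b$ and the single parameter $\kappa=M^2A$, with a Fresnel-type structure. Equivalently, $G_b$ is the squared magnitude of $\frac1M\sum_n e^{j(\pi \kappa (n/M)^2 - \pi \kappa n/M + 2\pi b n/M)}$, a chirp correlated against the tone offset $b$. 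At $\kappa=0$ we have $G_0=1$ and $G_b=0$ for $b\ge1$, consistent with $A_{\textrm{th},0}=0$. As $\kappa$ grows, the chirp's instantaneous frequency spreads over a band of width proportional to $\kappa$. The energy of the chirp spectrum then migrates outward from the center tone toward the band edges $|b|\approx\kappa/2$.

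I would then argue an ordering (single-crossing) property. For consecutive indices, $G_{b}(A)-G_{b-1}(A)$ is negative on $[0,A_{\textrm{th},b})$ and changes sign at $A_{\textrm{th},b}$. The threshold sequence $A_{\textrm{th},b}$ is increasing in $b$. On the interval $[A_{\textrm{th},b^\star},A_{\textrm{th},b^\star+1}]$, index $b^\star$ dominates $b^\star-1$ and $b^\star+1$. A unimodality-in-$b$ argument for fixed $A$ then extends this to all $b$: the stationary-phase picture says the spectrum magnitude peaks near the edge index $b\approx\kappa/2$ and decays away from it. Hence $\max_b G_b(A)=G_{b^\star}(A)$.

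The main obstacle is establishing this monotone ordering rigorously. The thresholds are defined only implicitly as smallest roots of \eqref{define_B}, and Fresnel-edge oscillations (Gibbs-like ripples) can break exact unimodality in $b$. I would first try to verify unimodality from the stationary-phase/Fresnel-integral asymptotics in the large-$M$ regime, accepting error terms of order $\mathcal O(1/M)$ inherited from Lemma \ref{lemma_1}. Failing that, I would state the ordering of the thresholds as the defining assumption of $b^\star$ and support it numerically over the admissible range \eqref{define_A_range}. That range keeps $\kappa=M^2A\lesssim \sqrt{M}$ moderate, so only finitely many thresholds are relevant.
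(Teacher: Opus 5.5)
The gap is the step that upgrades ``$b^\star$ beats $b^\star\pm1$'' to ``$b^\star$ beats every $b$'' via unimodality in $b$. That property is false, and so is the conclusion it is meant to deliver. Your first part (the $\pm$ pair from $G_{-b}=G_b$) and your reduction to $\kappa=M^2A$ with a continuum limit agree with the paper. Your autocorrelation integral is the expanded form of the paper's $\mathcal G_b(\tau)=\bigl|\int_0^1 e^{j\pi(\tau(x^2-x)+2bx)}\,dx\bigr|^2$.

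Evaluating \eqref{fresnel_gain} at $\tau=\beta_3=10.58573$ gives $\mathcal G_0=0.13203$, $\mathcal G_1=0.06056$ and $\mathcal G_2=\mathcal G_3=0.12940$. So the sequence in $b$ drops and then rises again. On the window $\tau\in(10.50374,10.81127)$, which straddles $\beta_3$, the non-adjacent candidate $b=0$ beats the interval-rule index $b^\star\in\{2,3\}$. Since $M^2A_{\max}=\sqrt{2M}/1.24$, this window lies inside the admissible range once $M\gtrsim 85$, so the large-$M$ regime does not exclude it.

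Nor is it a Gibbs artifact that asymptotics will absorb. For $|b|<\kappa/2$ the stationary point lies in $(0,1)$ and $\mathcal G_b\approx 1/\kappa$. However, the interference between that interior term and the endpoint terms carries a factor $(-1)^b$ times a slowly varying phase. Hence $\mathcal G_b$ zig-zags in $b$ with relative amplitude of order $\kappa^{-1/2}$, which near $\kappa\approx 10$ is comparable to the margin by which the edge candidate wins. Your heuristic also misplaces the winner: it sits roughly one Fresnel zone inside the band edge, near $b\approx\kappa/2-1.22\sqrt{\kappa/2}$, not at $\kappa/2$. Consequently the lemma holds only up to the exception of Remark~\ref{remark_execption}, and no stationary-phase unimodality argument can prove it as stated.

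Your fallback does not repair this. Numerically confirming $A_{\mathrm{th},1}<A_{\mathrm{th},2}<\cdots$ and the single-crossing property only certifies comparisons between neighbours. Even that requires $G_{b^\star}-G_{b^\star-1}$ not to re-cross zero inside the interval, which the smallest-root definition does not give. Meanwhile the offending candidate is two or three indices away. The paper's proof instead compares $\mathcal G_{b^\star}$ against all candidates $b=0,\dots,5$ across each interval using the closed Fresnel form. That is exactly how the $b=0$ window is detected and then recorded as an exception.

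To make your outline valid you need that global comparison on every relevant interval, plus an amended statement. An analytic version would have to show that beyond some $\kappa_0$ the edge overshoot (about $1.37/\kappa$) dominates every interior ripple, including $b=0$. You would then check the intervals below $\kappa_0$ directly, bearing in mind that the number of relevant intervals grows like $\sqrt M$, so a fixed finite check is not uniform in $M$.
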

\footnotetext{$\nu_{\max}^{\mathrm{cand}}(\theta_t,r_t)$ is as a function of $A(\theta_t,r_t)$.}
\begin{proof}
See Appendix \ref{appendix_C}.
\end{proof}
\begin{remark}\label{remark_execption}
    The interval rule identifying $b^\star$ in Lemma \ref{lemma_3} is exact except on a narrow range of $A(\theta_t,r_t)$ near the transition between $b^\star=2$ and $b^\star =3$, where the true maximizing candidate is instead $b^\star = 0$. The resulting loss in beamforming gain from using the stated rule is at most $\approx2\%$. This exceptional range has width $\mathcal{O}(1/M^2)$ in $A(\theta_t,r_t)$ and therefore shrinks as the array size grows, consistent with the large-$M$ regime the lemma targets. Since the loss is small, we retain the simpler, general form of Lemma \ref{lemma_3} and use it as stated throughout the paper. The exact range is given in Appendix \ref{appendix_C}.
\end{remark}

Equation \eqref{max_ff_cb} gives an analytical approximation of $\nu_{\max}(\theta_t,r_t)$. The exact maximum $\nu_{\max}(\theta_t,r_t)$ is obtained by maximizing $\nu_k(\theta_k,\theta_t,r_t)$ over the continuous far-field direction $\theta_k$, whereas $\nu_{\max}^{\mathrm{cand}}(\theta_t,r_t)$ is obtained by maximizing over the candidate directions $\theta_k^{(b)}$. Fig. \ref{fig:1} compares $\nu_{\max}^{\mathrm{cand}}(\theta_t,r_t)$ with direct numerical maximization over $\theta_k$. The close agreement for all considered values of $M$ confirms the accuracy of the candidate-direction approximation.


For near-field users, the maximum gain is generally below one because no far-field beam can eliminate the residual quadratic phase mismatch. Hence, as the RFF codebook size increases, the gain approaches $\nu_{\max}(\theta_t,r_t)$, rather than one. For small values of $A(\theta_t,r_t)$, the maximum gain is close to one. This is expected because a small value of $A(\theta_t,r_t)$ implies that the quadratic component of the near-field is small. As a result, the residual phase mismatch between the near-field channel and the best far-field beam is small, so a far-field beam can closely match the channel. As $A(\theta_t,r_t)$ increases, the quadratic phase component becomes stronger and the maximum achievable gain decreases.

Fig. \ref{fig:1} also shows that the degradation occurs at smaller values of $A(\theta_t,r_t)$ as $M$ increases. This is because the quadratic phase scales approximately as $A(\theta_t,r_t)M^2$. Hence, for larger arrays, the mismatch between the near-field channel and the far-field beams becomes significant even for smaller values of $A(\theta_t,r_t)$. The figure also shows several transition points, which correspond to the discrete sequence $\{ A_{\textrm{th},i} \}$. The transition points occur at similar gain levels but at different values of $A(\theta_t,r_t)$ for different array sizes. A more detailed characterization of these transition points and their relationship to the far-field gain profile is left as future work.

Figs. \ref{fig:2}, \ref{fig:3}, and \ref{fig:4} show heat maps of $\nu_{\max}(\theta_t,r_t)$ when the codebook is restricted to far-field beams while the user channel is near-field. The array is placed along the X-axis, and the gain is plotted over the user location in the X-Y plane. As $M$ increases, the maximum achievable gain deteriorates over a larger portion of the user region. In particular, Fig. \ref{fig:4} shows that for $M = 256$, the loss remains noticeable even for user distances on the order of 100 m. This is a consequence of the stronger near-field effect induced by the larger array aperture, which makes the mismatch between the near-field channel and the far-field beam more significant. 

The degradation is most severe near the array broadside, where the quadratic phase variation across the array is strongest. In contrast, when the user is located close to the array axis direction, i.e., near $\lvert\theta_t\rvert = 1$, the maximum gain remains high. This behavior is consistent with the quadratic phase curvature parameter $A(\theta_t,r_t)$, which is largest near broadside and vanishes when $\theta_t = \pm 1$. Therefore, for $\theta_t \approx \pm 1$, the near-field quadratic phase term is minimal, and the channel behaves similarly to a far-field channel.
\begin{figure}[h]
    \centering
    \includegraphics[width=0.8\columnwidth]{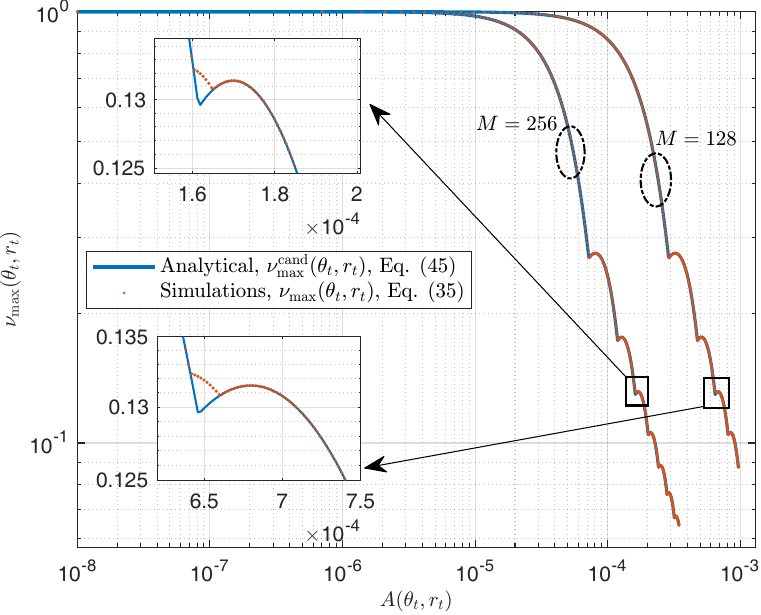}
    \caption{$\nu_{\text{max}}(\theta_t,r_t)$ as a function of $A(\theta_t,r_t)$.}
    \label{fig:1}
\end{figure}
\begin{figure}[h]
    \centering
    \includegraphics[width=0.8\columnwidth]{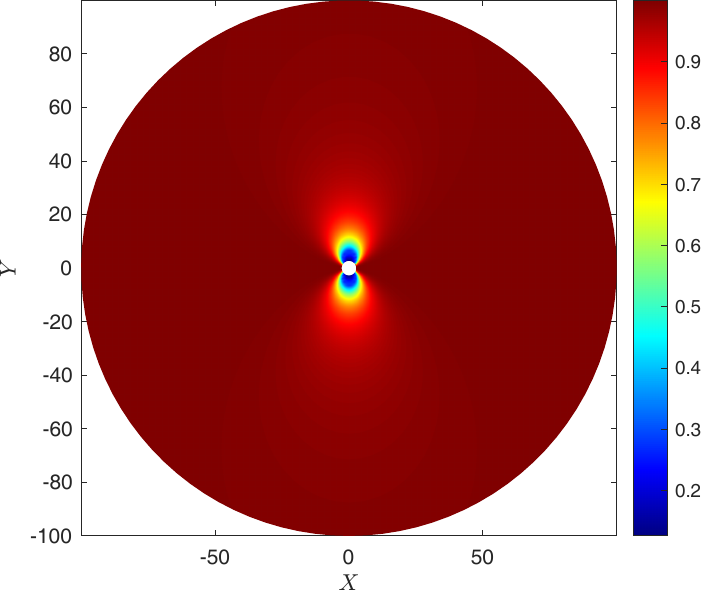}
    \caption{Heat map of $\nu_{\text{max}}(\theta_t,r_t)$ for $M = 64$.}
    \label{fig:2}
\end{figure}
\begin{figure}[h]
    \centering
    \includegraphics[width=0.8\columnwidth]{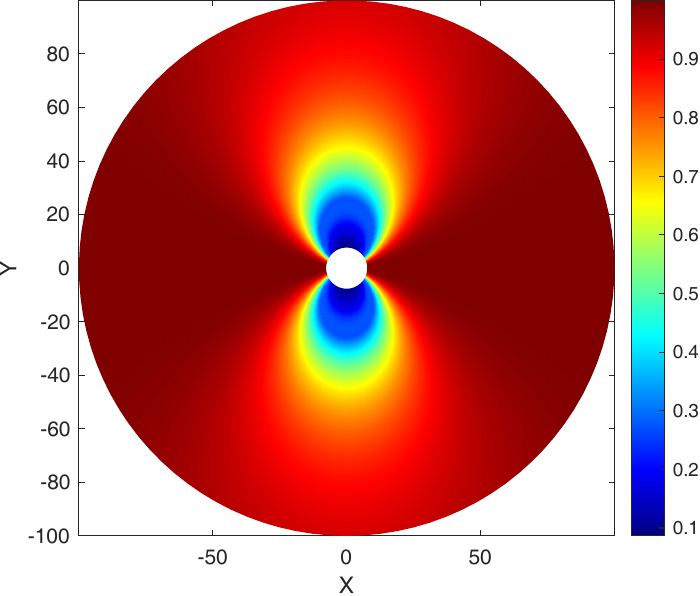}
    \caption{Heat map of $\nu_{\text{max}}(\theta_t,r_t)$ for $M = 128$.}
    \label{fig:3}
\end{figure}
\begin{figure}[h]
    \centering
    \includegraphics[width=0.8\columnwidth]{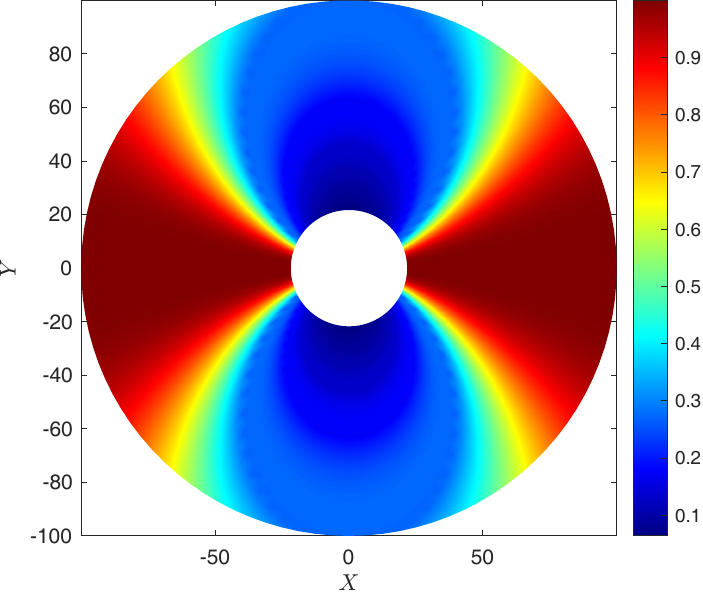}
    \caption{Heat map of $\nu_{\text{max}}(\theta_t,r_t)$ for $M = 256$.}
    \label{fig:4}
\end{figure}
\subsubsection{Asymptotic CDF of \texorpdfstring{$\max_{k} \nu_{k}$}{max k nu k}}
We now characterize the distribution of the best gain among $K$ independently generated RFF codewords. For a user located at $(\theta_t,r_t)$, let
\begin{align}
    F_{\nu\mid \theta_t,r_t}(\gamma)
    \triangleq
    \Pr\{\nu \le \gamma \mid \theta_t,r_t\}
\end{align}
denote the conditional CDF of the gain achieved by a single randomly generated far-field beam. Since the $K$ codewords are generated independently, the corresponding gains $\{\nu_k\}_{k =1}^{K}$ are conditionally i.i.d., and therefore $\Pr \{ \max_{ k} \nu_k \le \gamma \mid \theta_t,r_t \} = (F_{\nu\mid\theta_t,r_t}(\gamma))^K$. Averaging over the user-location distribution yields
\begin{align}\label{cdf_ff_1}
    \Pr\{\max_{ 1\leq k \leq K} &\nu_k \le \gamma \} = \notag \\
    &\iint
    \big(F_{\nu\mid\theta_t,r_t}(\gamma)\big)^K
    f_\Theta(\theta_t)f_{R_t}(r_t)\, d\theta_t\,dr_t.
\end{align}

To evaluate \eqref{cdf_ff_1}, it is useful to distinguish between user locations for which the threshold $\gamma$ is unachievable by any far-field beam and those for which it is achievable. Define 
\begin{align}
    \mathcal R(\gamma)
    \triangleq
    \{(\theta_t,r_t): \nu_{\max}(\theta_t,r_t)\le \gamma\}.
\end{align}
The set $\mathcal{R}(\gamma)$ contains the user locations for which even the best far-field beam achieves a gain of at most $\gamma$. For such locations, $F_{\nu\mid\theta_t,r_t}(\gamma)=1$. Splitting the integral in \eqref{cdf_ff_1} over $\mathcal{R}(\gamma)$ and its complement $\mathcal{R}^{c}(\gamma)$ gives
\begin{align}\label{cdf_ff_2}
    \Pr&\{\max_{1 \leq k \leq K} \nu_k \le \gamma\}
    =
    \iint_{\mathcal R(\gamma)}
    f_\Theta(\theta_t)f_{R_t}(r_t)\,d\theta_t\,dr_t \notag\\
    &\quad+
    \iint_{\mathcal R^c(\gamma)}
    \big(F_{\nu\mid\theta_t,r_t}(\gamma)\big)^K
    f_\Theta(\theta_t)f_{R_t}(r_t)\,d\theta_t\,dr_t.
\end{align}
Denote the first and second integrals in \eqref{cdf_ff_2} by $I_1(\gamma)$ and $I_2(K,\gamma)$, respectively.

For every \((\theta_t,r_t)\in \mathcal R^c(\gamma)\), we have $\nu_{\max}(\theta_t,r_t) > \gamma$. Hence, there exists a non-zero number of far-field directions for which the gain exceeds $\gamma$, and therefore $F_{\nu\mid\theta_t,r_t}(\gamma) < 1$. It follows that $\lim_{K \to \infty} I_2(K,\gamma) = 0$. Thus, from \eqref{cdf_ff_2},
\begin{align}
    \lim_{K\to\infty}\Pr\{\max_{1 \leq k \leq K} \nu_k \le \gamma\}
    =
    I_1(\gamma),
\end{align}
where $I_1(\gamma) = \Pr\left\{ \nu_{\max}(\theta_t,r_t)\le \gamma \right\}$. Therefore, unlike the RVQ and RP codebooks, the RFF codebook does not satisfy
\begin{align}
    \lim_{K\to\infty}\Pr\{\max_{1 \leq k \leq K} \nu_k \le \gamma\}=0
\end{align}
for every $\gamma \in [0,1)$. This occurs because there are user locations for which the threshold $\gamma$ cannot be exceeded by any far-field beam, regardless of the codebook size. This reflects the fundamental residual mismatch between far-field codebooks and near-field channels.
\subsection{Random Near-Field (RNF) Codebook}\label{RNF_section}
We next consider the RNF codebook. Unlike RFF, whose codewords are restricted to far-field steering vectors, RNF codewords are generated directly from the near-field steering vector set. Thus, each RNF codeword contains both the linear and quadratic phase components of the near-field channel, making RNF a natural benchmark for near-field beamforming.

Each RNF codeword $\mathbf{w}_{k} \in \mathcal{F}$ is generated by first sampling a random direction $\theta_k \sim f_{\Theta}(\theta)$ and a random distance $r_k \sim f_{R}(r)$, and then forming the corresponding near-field steering vector. The $k$-th codeword is then given by
\begin{align}
    \mathbf{w}_k = \mathbf{a}(\theta_k,r_k).
\end{align}

For a user with channel vector $\mathbf{h}(\theta_t,r_t) = \alpha \mathbf{a}(\theta_t,r_t)$, the beamforming gain of the $k$-th RNF codeword is
\begin{align}
    \nu_k
    = \left\lvert \mathbf{a}^{\mathsf H}(\theta_t,r_t)\mathbf{a}(\theta_k,r_k) \right\rvert^2.
\end{align}
This gain is large when the sampled pair $(\theta_k,r_k)$ lies close to the true user location $(\theta_t,r_t)$. Therefore, the high-gain behavior of the RNF codebook is governed by the local geometry of the near-field steering vector in the joint angular-distance domain.

The following lemma characterizes the high-gain behavior of a single RNF codeword.
\begin{lemma}\label{lemma_nf}
    For $\gamma = 1 - \epsilon$ with $\epsilon \ll 1$, the CDF of the RNF beamforming gain satisfies
    \begin{align}
        F_{\nu}(\gamma) = 1 - G_{M}(1-\gamma) + \mathcal{O}((1 - \gamma)^{3/2}),
    \end{align}
    where $G_{M}$ is a constant determined by the local near-field steering vector geometry and the sampling distribution.
\end{lemma}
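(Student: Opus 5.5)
We plan to prove Lemma \ref{lemma_nf} by a local (Laplace-type) expansion of the gain around the true user location, followed by a volume computation of the resulting near-elliptical superlevel set. The argument has four steps.

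\textbf{Step 1: reduce the gain to a local quadratic form.} Parametrize each codeword by the pair $(u,v)$, with $u=\theta_k$ and $v=A(\theta_k,r_k)$. In these coordinates the codeword phase is, up to a common constant, $\pi(-nu+n^2 v)$. The gain then becomes
\begin{align}
\nu=\Big|\frac1M\sum_{n=0}^{M-1}e^{j\pi(n\Delta_1-n^2\Delta_2)}\Big|^2,
\end{align}
with $\Delta_1=u-\theta_t$ and $\Delta_2=v-A(\theta_t,r_t)$.

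Set $x_n=\pi(n\Delta_1-n^2\Delta_2)$ and expand to second order. This gives
\begin{align}
\nu=1-\mathrm{Var}_n(x_n)+\mathcal O(\|\Delta\|^3),
\end{align}
where $\mathrm{Var}_n$ denotes the empirical variance over $n$ with uniform weights $1/M$. The reason is that $|\frac1M\sum e^{jx_n}|^2=1-\big(\frac1M\sum x_n^2-(\frac1M\sum x_n)^2\big)+\dots$, and the third-order terms do not cancel in general. Hence
\begin{align}
1-\nu=\Delta^{\mathsf T}\mathbf Q\,\Delta+\mathcal O(\|\Delta\|^3),
\end{align}
where $\mathbf Q=\pi^2\,\mathrm{Cov}_n\big((n,-n^2)\big)$ is the empirical covariance of the vector $(n,-n^2)$. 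For $M\ge3$ the sequences $1,n,n^2$ are linearly independent, so $\mathbf Q$ is positive definite. Its entries are explicit power sums, namely $\mathrm{Var}(n)=(M^2-1)/12$, $\mathrm{Cov}(n,n^2)$ and $\mathrm{Var}(n^2)$. This yields $\det\mathbf Q$ in closed form, scaling like $M^6$ for large $M$.

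\textbf{Step 2: compute the probability of the superlevel set.} The event $\{\nu\ge1-\epsilon\}$ is, to leading order, the ellipse $\{\Delta^{\mathsf T}\mathbf Q\Delta\le\epsilon\}$, whose area is $\pi\epsilon/\sqrt{\det\mathbf Q}$. We then transfer the sampling density to $(u,v)$. Since $\theta_k$ and $r_k$ are independent with densities $f_\Theta$ and $f_R$, the Jacobian formula gives
\begin{align}
f_{U,V}(u,v)=f_\Theta(u)\,f_R(r)\,\Big|\frac{\partial r}{\partial v}\Big|,
\end{align}
with $r=\lambda(1-u^2)/(4v)$. Assume the true point lies in the interior of the support and the density is continuous and positive there. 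Then
\begin{align}
p_M(\epsilon)=f_{U,V}(\theta_t,A_t)\,\frac{\pi\epsilon}{\sqrt{\det\mathbf Q}}+\mathcal O(\epsilon^{3/2}),
\end{align}
where $A_t=A(\theta_t,r_t)$. This identifies the constant as $G_M=\pi f_{U,V}(\theta_t,A_t)/\sqrt{\det\mathbf Q}$, and the lemma follows from $F_\nu(\gamma)=1-p_M(1-\gamma)$.

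\textbf{Step 3: control the error term.} The superlevel set has linear size $\mathcal O(\sqrt\epsilon)$. The cubic remainder therefore perturbs the boundary radius by a relative amount $\mathcal O(\sqrt\epsilon)$, which changes the area by $\mathcal O(\epsilon^{3/2})$. The variation of the density across the set contributes the same order. To make this rigorous, I would write the set in polar coordinates as $\rho\le\sqrt{\epsilon/q(\phi)}\,(1+\mathcal O(\sqrt\epsilon))$, using the implicit function theorem, and integrate.

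\textbf{Step 4 and main obstacle: isolation of the global maximum.} The hardest step is showing that $\nu\ge1-\epsilon$ occurs only near $\Delta=0$. For small $\epsilon$, no distant parameter pair may produce a gain close to one. The danger is aliasing: for $\Delta_1=2$ with $\Delta_2=0$ the phases $e^{j\pi n\Delta_1}$ repeat exactly. This is excluded because $u\in[-1,1]$. The points with $\nu=1$ are exactly those where $x_n$ is constant modulo $2\pi$ across $n$, and within the bounded parameter box this forces $\Delta=0$. I would first try to argue this through the polynomial structure of the phase, using the second difference $2\pi\Delta_2$ and the first difference. Continuity and compactness then give a gap $\sup_{\|\Delta\|>\delta}\nu<1$, so for $\epsilon$ below that gap the superlevel set is local.

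Degenerate locations must be treated separately: boundary points of the support (for example $\theta_t=\pm1$, where $A=0$), and cases where the density vanishes. There the constant changes, for instance by a factor of $1/2$ at a smooth boundary. I would state the lemma for interior points.
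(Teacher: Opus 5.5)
Your proposal is correct and follows the same skeleton as the paper's proof in Appendix~\ref{appendix_nf}. Both expand the gain to second order about the true location, identify $1-\nu$ with the empirical variance of the phase differences (the paper's $\frac{1}{M}\|\mathbf P\boldsymbol{\xi}\|^2$), treat the superlevel set as an ellipse of area proportional to $\epsilon$, and multiply by the local sampling density.

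The differences are in execution, and mostly they favor your version.

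\textbf{Coordinates.} The paper works in $(\theta,r)$ with a generic Jacobian $\mathbf J$. As a result, $\mathbf Q=\mathbf J^{\mathsf T}\mathbf P\mathbf J$ depends on the location, its nondegeneracy is never checked, and $G_M$ is left implicit. You instead pass to $(\theta,A)$ under the quadratic phase model that the paper itself uses for the channel. Then the gain depends only on $\Delta$, so $\mathbf Q=\pi^2\mathrm{Cov}_n\big((n,-n^2)\big)$ is location-independent and visibly positive definite. All location dependence moves into the pushed-forward density, and $G_M=\pi f_{U,V}(\theta_t,A_t)/\sqrt{\det\mathbf Q}$ becomes explicit.

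\textbf{Isolation of the maximum.} Your Step 4 supplies something the paper silently assumes. The paper says only ``for small offsets'' and then equates $\{\nu\ge 1-\epsilon\}$ with the ellipse, without ruling out distant parameter pairs with near-unit gain. Your second-difference argument closes this: $\Delta_2\in\mathbb Z$, hence $\Delta_2=0$ since $A\le A_{\max}\ll 1$, and then $\Delta_1\in 2\mathbb Z$; compactness finishes it. One precision: $\Delta_1=\pm2$ is excluded by $|\theta_t|<1$, not by $u\in[-1,1]$ alone. At $\theta_t=-1$ and $u=1$, both points have $A=0$ and the gain is exactly one. Your interior-point assumption covers this case.

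\textbf{What the paper's route buys.} Your isolation argument relies on the exact polynomial form of the phase. For the exact-distance steering vectors it would need a separate justification, whereas the paper's local computation applies to any smooth phase map.

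\textbf{A minor slip.} In your coordinates $x_n$ is exactly linear in $\Delta$, and $|\frac1M\sum_n e^{jx_n}|^2=\frac{1}{M^2}\sum_{m,n}\cos(x_m-x_n)$ is even. So the remainder is $\mathcal O(\|\Delta\|^4)$, not a genuine cubic term. This only strengthens your Step 3.
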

\begin{proof}
    See Appendix \ref{appendix_nf}.
\end{proof}
Lemma \ref{lemma_nf} shows that as $\gamma$ approaches one, the right-tail probability of the RNF gain scales linearly with $1-\gamma$. This linear scaling follows from the fact that RNF codewords are sampled from the same two-parameter near-field model as the channel. Therefore, a high-gain event occurs when the randomly chosen pair $(\theta_k,r_k)$ falls within a sufficiently small neighborhood of $(\theta_t,r_t)$.

Using the order-statistics result from \eqref{order_statistics}, the maximum RNF gain over $K$ independently generated codewords satisfies $\Pr\left\{\max_{1 \leq k \leq K} \nu_k \le \gamma\right\} = \big(F_{\nu}(\gamma)\big)^K$. Since $F_{\nu}(\gamma) < 1$ for every fixed $\gamma < 1$, it follows that 
\begin{align}
    \max_{1\leq k\leq K}\nu_k
    \xrightarrow[K\to\infty]{\mathsf p}
    1.
\end{align}
\section{Simulations}\label{results}
\subsection{Limiting Behavior of the Maximum Beamforming Gain}
We now summarize the limiting behavior of the maximum beamforming gain for the four random codebooks. For RVQ, RP, and RNF, the single-codeword CDF satisfies $F_{\nu}(\gamma) < 1$ for every fixed $\gamma < 1$. Therefore, using the order-statistics result in \eqref{order_statistics}, the maximum gain converges in probability to one as the $K \to \infty$.

Using the general maximum gain characterization in Section \ref{section_codebook_design}, we now compare the high-gain tails of the different codebooks. For RVQ, RP, and RNF, the single codeword high-gain probabilities satisfy $p_{M}^{\mathrm{RVQ}}(\epsilon) = \epsilon^{M-1}$, $p_{M}^{\mathrm{RP}}(\epsilon) \sim C_M \epsilon^{\frac{M-1}{2}}$, and $p_{M}^{\mathrm{RNF}}(\epsilon) \sim G_M \epsilon$, respectively. Since the maximum gain exceeds $1- \epsilon$ with high probability when $K p_M(\epsilon) \to \infty$, the codebook sizes required to achieve $\Pr \{ \max_{k}\ \nu_k \geq 1 - \epsilon \} \to 1$ scale as
\begin{align}
    K_{\mathrm{RVQ}} &\gg \epsilon^{-(M-1)}, \\
    K_{\mathrm{RP}} &\gg \frac{1}{C_M}\epsilon^{-\frac{M-1}{2}}, \\
    K_{\mathrm{RNF}} &\gg \frac{1}{G_M} \epsilon^{-1}.
\end{align}
Thus, RVQ requires exponential growth in $M$ to maintain a fixed target gap $\epsilon$ from unit gain. RP also requires exponential growth in $M$, but with a smaller exponent because the constant-amplitude removes amplitude fluctuations. RNF has the most favorable high-gain scaling since its right-tail probability is linear in $\epsilon$.

In contrast, the RFF codebook is restricted to far-field steering vectors. Increasing the RFF codebook size allows the gain to approach the maximum achievable beamforming gain $\nu_{\max}(\theta_t,r_t)$, but this value is below one for finite-distance near-field users. Therefore, RVQ, RP, and RNF can asymptotically achieve unit gain as $K \to \infty$, whereas RFF saturates below unit gain. 
\subsection{Performance of Different Codebooks}
\begin{figure}[!t]
    \centering
    \includegraphics[width=0.8\columnwidth]{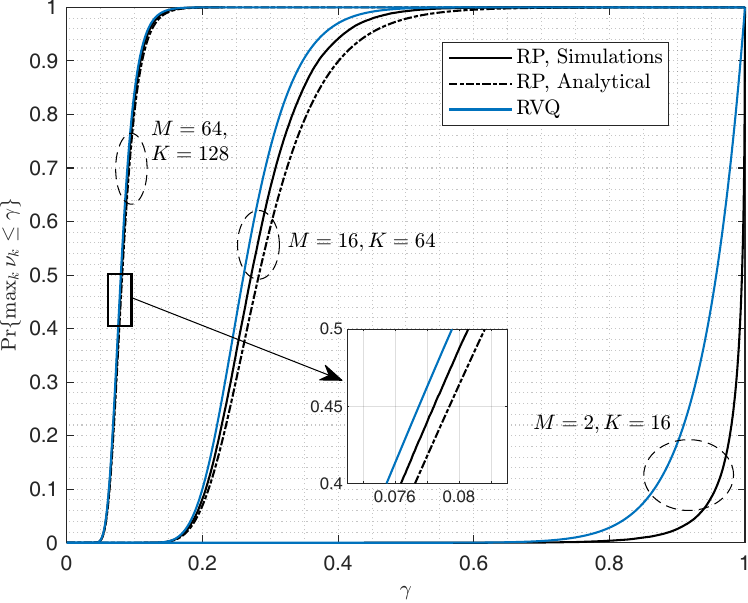}
    \caption{Performance of RVQ and RP codebooks.}
    \label{fig2}
\end{figure}
\begin{figure}[!t]
    \centering
    \includegraphics[width=0.8\columnwidth]{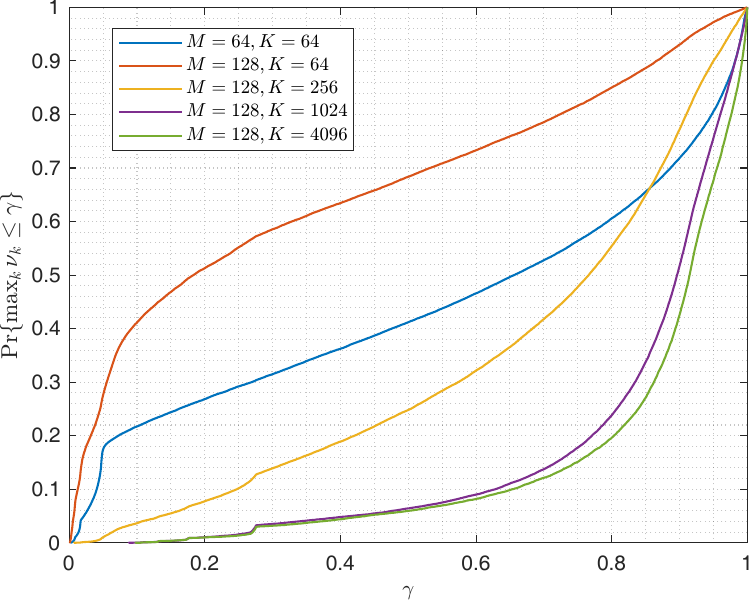}
    \caption{Performance of RFF codebook.}
    \label{fig_rff}
\end{figure}
\begin{figure}[!t]
    \centering
    \includegraphics[width=0.8\columnwidth]{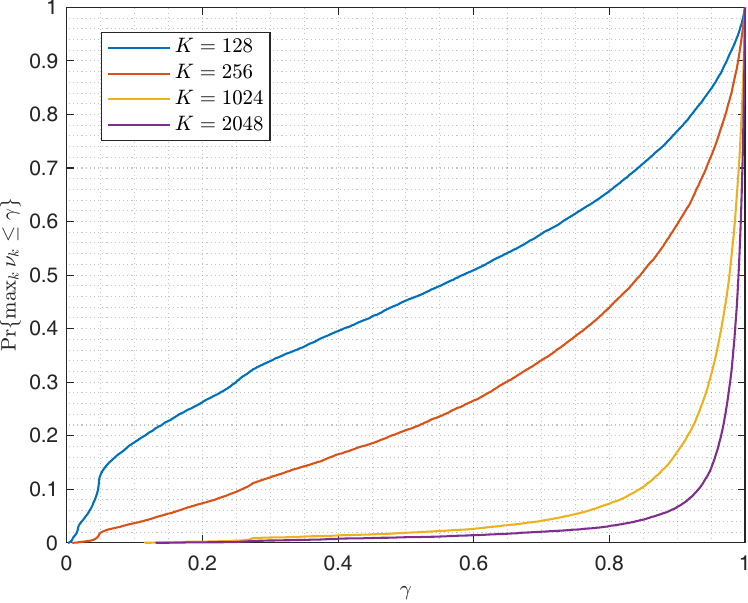}
    \caption{Performance of RNF codebook, $M =128$.}
    \label{fig3}
\end{figure}
\begin{figure}[!t]
    \centering
    \includegraphics[width=0.8\columnwidth]{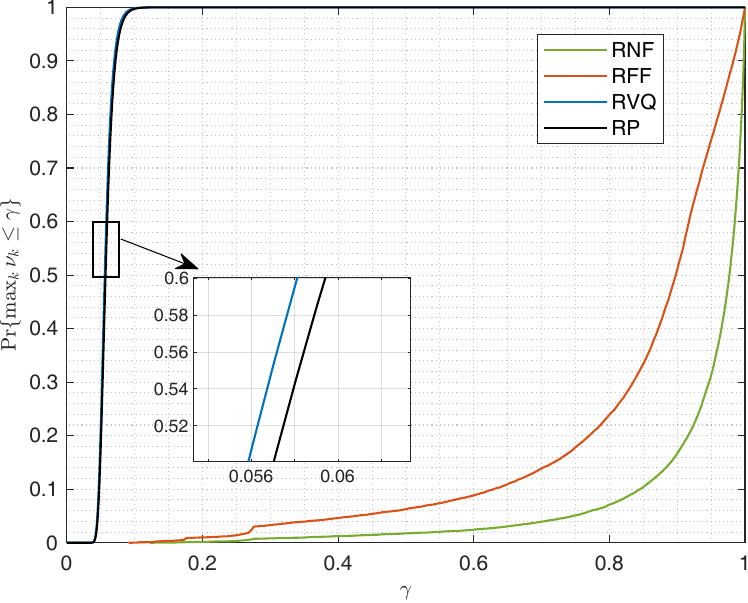}
    \caption{Performance of RVQ, RP, RFF, and RNF codebooks with $M =128$ and $K = 1024$.}
    \label{fig4}
\end{figure}
In this section, we present the simulation results. The operating frequency is 10 GHz. The user lies in a region with $\theta_{t} \in [-1,1]$ and $r_t \in [r_{\min} ,100\ \text{m}]$.

Fig. \ref{fig2} compares the performance of the RVQ and RP codebooks. As the number of antennas increases, the performance of both codebooks degrades significantly. This is because the probability that the phase contributions from all antennas add constructively decreases with $M$, resulting in a lower beamforming gain. Moreover, the RP codebook consistently outperforms the RVQ codebook because the RP codebook preserves the constant-modulus structure across the antenna array and randomizes only the phases. In contrast, RVQ introduces randomness in both amplitudes and phases, which can reduce the effective contribution of some antennas even when the phases are perfectly aligned. This behavior is consistent with the high-gain tail analysis, the RP tail decays as $(1-\gamma)^{\frac{M-1}{2}}$, whereas the RVQ tail decays faster as $(1-\gamma)^{M-1}$. 

Fig. \ref{fig_rff} shows the performance of the RFF codebook. As the codebook size $K$ increases, the maximum beamforming gain improves, which is expected because a larger codebook provides more far-field directions and therefore increases the probability of selecting a beam that is well aligned with the user. However, as the number of antennas $M$ increases, the performance degrades. This is because larger arrays produce narrower far-field beams, so even a small mismatch between the user direction and the selected beam direction can result in lower beamforming gain. In addition, the near-field quadratic phase variation is more significant for larger arrays.

Furthermore, as shown in Section \ref{RFF}, the RFF codebook does not generally achieve unit gain even as $K$ becomes large. Increasing $K$ only allows the codebook to approach the best far-field beam for a given near-field user. Figure \ref{fig_rff} supports this result, since the maximum gain does not equal one even when the number of codewords is large.

Fig. \ref{fig3} shows the performance of the RNF codebook for $M = 128$. As the codebook size $K$ increases, the CDF shifts toward the right, which means that the maximum beamforming gain becomes larger with a higher probability. This is expected because a larger RNF codebook contains more randomly sampled near-field steering vectors, increasing the probability that one of the sampled pairs $(\theta_k,r_k)$ lies close to the true user location $(\theta_t,r_t)$. Moreover, unlike the RFF codebook, as $K$ increases, the maximum beamforming gain converges in probability to one, consistent with the limiting behavior derived in Section \ref{RNF_section}. This is evident from the figure, as $K$ increases, the CDF remains close to zero over most of the gain range and rises sharply near $\gamma = 1$, indicating that the maximum gain is concentrated near one.

Fig. \ref{fig4} compares the CDF of the maximum beamforming gain for the RVQ, RP, RFF, and RNF codebooks. It can be observed that RNF achieves the best performance, followed by RFF, RP, and RVQ. Here, better performance means that the CDF is shifted farther to the right, or equivalently, that the probability $\Pr\{ \max_{k} \nu_k \leq \gamma\}$ is smaller for a given threshold $\gamma$. The RNF codebook provides the best performance, which is intuitive because its codewords capture the near-field phase structure more accurately than those of other codebooks. The RFF codebook performs well for a fixed codebook size because its codewords have a structured phase profile, which is better than RP and RVQ. However, although RFF outperforms RP and RVQ for a fixed $K$, it is asymptotically inferior to both, whereas RNF provides the best performance across all values of $K$.
\section{Conclusions}\label{conclusions}
In this work, we studied four classes of randomly generated codebooks for near-field communications. 
We characterized their beamforming gain distributions and compared the limiting behavior of the maximum gain as the codebook size increases. 
For the RP codebook, we derived a large-array bulk approximation and a right-tail approximation near unit gain. 
For the RFF codebook, we derived an analytical expression for the maximum gain achievable by far-field beams and characterized the corresponding maximizing far-field direction. For the RNF codebook, we characterized the right-tail behavior near unit gain and showed that the probability scales linearly with the gap from unit gain. 
The results show that the performance of RVQ and RP codebooks deteriorates significantly as the array size grows, while RNF provides the most favorable high-gain scaling. RFF can outperfrom RVQ and RP for finite codebook sizes, but its maximum gain saturates below one for finite-distance near-field users. 

Future work includes studying whether far-field beams can be used for low-overhead near-field beam training and whether their measurements contain sufficient information to estimate both the user's angle and distance. Another direction is to characterize the geometry of the near-field high-gain region, which can guide the design of more efficient near-field codebooks.
\bibliography{ref} 
\bibliographystyle{IEEEtran}
\appendices
\section{Proof of Lemma \ref{lemma_-1}}\label{appendix_-1}
Define the random variable $ X = \sum_{n=0}^{M-1} e^{j\delta_n} = X_R+jX_I$, where the real and imaginary components are the random variables $X_R=\sum_{n=0}^{M-1}\cos(\delta_n)$ and $X_I=\sum_{n=0}^{M-1}\sin(\delta_n)$, respectively.

Then, $\nu = \frac{|X|^2}{M^2} = \frac{X_R^2+X_I^2}{M^2}$. Since $\delta_n\sim\mathrm{i.i.d.}\ \mathrm{Unif}[0,2\pi)$, we have $\mathbb E[\cos(\delta_n)]
    =
    \mathbb E[\sin(\delta_n)]
    =
    0$, $\mathrm{Var}(\cos(\delta_n))
    =
    \mathrm{Var}(\sin(\delta_n))
    =
    \frac{1}{2}$, and $\mathbb E[\cos(\delta_n)\sin(\delta_n)]
    =
    0$. Therefore, by the Berry-Esseen theorem \cite{vershynin2026friendly},
\begin{align}
    \left(
    \frac{X_R}{\sqrt{M/2}},
    \frac{X_I}{\sqrt{M/2}}
    \right)
    \xrightarrow{d}
    (Z_1,Z_2),
\end{align}
with an approximation error of order $\mathcal O(1/\sqrt{M})$, where $Z_1$ and $Z_2$ are independent standard normal random variables.

It follows that
\begin{align}
    M\nu
    =
    \frac{1}{2}
    \left[
    \left(\frac{X_R}{\sqrt{M/2}}\right)^2
    +
    \left(\frac{X_I}{\sqrt{M/2}}\right)^2
    \right]
    \xrightarrow{d}
    \frac{1}{2}(Z_1^2+Z_2^2).
\end{align}
Since $Z_1^2+Z_2^2\sim\chi_2^2$, we have $\frac{1}{2}(Z_1^2+Z_2^2) \sim \mathrm{Exp}(1)$. Thus, for large $M$, $M\nu$ is approximately exponentially distributed with unit rate, or equivalently, $\nu$ is approximately exponentially distributed with rate $M$. Hence, $F_\nu(\gamma) = 1-e^{-M\gamma} + \mathcal O\left(\frac{1}{\sqrt{M}}\right)$,
which proves the lemma.
\section{Proof of Lemma \ref{lemma_rp_high_gain}}\label{appendix_rp_highgain}
Let $\epsilon=1-\gamma$. From \eqref{define_gain_RP}, $\nu = \frac{1}{M^2} \left\lvert \sum_{n=0}^{M-1} e^{j\delta_n} \right\rvert^2$, where $\boldsymbol{\delta}=[\delta_0,\ldots,\delta_{M-1}]^{\mathsf T}$
is uniformly distributed over $[0,2\pi)^M$. Expanding the squared
magnitude gives $\nu = 1 - \frac{1}{M^2} \sum_{m,n} 2\sin^2\left(\frac{\delta_m-\delta_n}{2}\right).$
Define the high-gain event $\mathcal D_\epsilon \triangleq \{ \nu \geq 1-\epsilon \}$. Equivalently,
\begin{align}
    \mathcal D_\epsilon
    =
    \left\{
    \sum_{m,n}
    2\sin^2\left(\frac{\delta_m-\delta_n}{2}\right)
    \leq
    M^2\epsilon
    \right\}.
\end{align}
Since all terms in the summation are nonnegative, on
$\mathcal D_\epsilon$ we have $\left| \sin\left(\frac{\delta_m-\delta_n}{2}\right) \right| \leq \frac{M}{\sqrt{2}}\sqrt{\epsilon}, \forall m,n$. Thus, for $M^2\epsilon\ll1$, all phases are close to a common phase
modulo $2\pi$. Equivalently, after unwrapping the phases around this
common phase, $|\delta_m-\delta_n| = \mathcal O(\sqrt{\epsilon}), \forall m,n.$ For any fixed $\rho\in(0,1/2)$ and sufficiently small phase differences,
\begin{align}
    \left(\frac{1}{2}-\rho\right)x^2
    \leq
    2\sin^2\left(\frac{x}{2}\right)
    \leq
    \left(\frac{1}{2}+\rho\right)x^2 .
\end{align}
Therefore, the event $\mathcal D_\epsilon$ is sandwiched as
\begin{align}
    &
    \left\{
    \left(\frac{1}{2}+\rho\right)
    \sum_{m,n}(\delta_m-\delta_n)^2
    \leq
    M^2\epsilon
    \right\}
    \subseteq
    \mathcal D_\epsilon
    \nonumber\\
    &\hspace{1.4cm}
    \subseteq
    \left\{
    \left(\frac{1}{2}-\rho\right)
    \sum_{m,n}(\delta_m-\delta_n)^2
    \leq
    M^2\epsilon
    \right\}.
\end{align}

Now define $\bar{\delta}
    =
    \frac{1}{M}
    \sum_{n=0}^{M-1}\delta_n$ and $\widetilde{\boldsymbol{\delta}}
    =
    \boldsymbol{\delta}
    -
    \bar{\delta}\mathbf 1.$
Using the identity
\begin{align}
    \sum_{m,n}
    (\delta_m-\delta_n)^2
    =
    2M
    \sum_{n=0}^{M-1}
    (\delta_n-\bar{\delta})^2
    =
    2M
    \left\|
    \widetilde{\boldsymbol{\delta}}
    \right\|^2,
\end{align}
the sandwich becomes
\begin{align}
    \left\{
    \left\|
    \widetilde{\boldsymbol{\delta}}
    \right\|^2
    \leq
    \frac{M\epsilon}{1+2\rho}
    \right\}
    \subseteq
    \mathcal D_\epsilon
    \subseteq
    \left\{
    \left\|
    \widetilde{\boldsymbol{\delta}}
    \right\|^2
    \leq
    \frac{M\epsilon}{1-2\rho}
    \right\}.
\end{align}

The centered vector $\widetilde{\boldsymbol{\delta}}$ lies in the
$(M-1)$-dimensional hyperplane $\mathcal H
    =
    \left\{
    \mathbf x\in\mathbb R^M:
    \mathbf 1^{\mathsf T}\mathbf x=0
    \right\}.$ The remaining degree of freedom is the common phase. Since the gain is
invariant to a common phase rotation, this degree of freedom contributes
a factor equal to the length of the diagonal circle, namely
$2\pi\sqrt{M}$. Hence, for a ball of radius $R$ in $\mathcal H$,
\begin{align}
    \Pr\left\{
    \left\|
    \widetilde{\boldsymbol{\delta}}
    \right\|
    \leq R
    \right\}
    =
    \frac{2\pi\sqrt{M}}{(2\pi)^M}
    V_{M-1}(R),
\end{align}
where $V_{M-1}(R)$ is the volume of an $(M-1)$-dimensional Euclidean ball
of radius $R$. Since $V_{M-1}(R) = \frac{\pi^{\frac{M-1}{2}}}{\Gamma\left(\frac{M+1}{2}\right)} R^{M-1}$,we obtain
\begin{align}
    &
    C_M
    (1+2\rho)^{-\frac{M-1}{2}}
    \epsilon^{\frac{M-1}{2}}
    \leq
    \Pr\{\mathcal D_\epsilon\}
    \nonumber\\
    &\hspace{2cm}
    \leq
    C_M
    (1-2\rho)^{-\frac{M-1}{2}}
    \epsilon^{\frac{M-1}{2}},
\end{align}
where $C_M = \frac{M^{\frac{M}{2}}}{2^{M-1}\pi^{\frac{M-1}{2}}\Gamma\left(\frac{M+1}{2}\right)}$, and $\Gamma(\cdot)$ is the Gamma function.
Since $\rho>0$ is arbitrary, letting $\rho\to0$ gives $\Pr\{\mathcal D_\epsilon\}
    \sim
    C_M
    \epsilon^{\frac{M-1}{2}}.$ Using the small-angle expansion $2\sin^2\left(\frac{x}{2}\right)
    =
    \frac{x^2}{2}
    +
    \mathcal O(x^4)$, and the fact that $x=\mathcal O(\sqrt{\epsilon})$ in the high-gain
region, the radius of the high-gain region can be written as $r_\epsilon
    =
    \sqrt{M\epsilon}
    \left(1+\mathcal O(\epsilon)\right).$ Therefore,
\begin{align}
    \Pr\{\nu\geq 1-\epsilon\}
    =
    C_M
    \epsilon^{\frac{M-1}{2}}
    +
    \mathcal O\left(
    \epsilon^{\frac{M+1}{2}}
    \right).
\end{align}

Substituting $\epsilon=1-\gamma$ yields
\begin{align}
    F_\nu(\gamma)
    =
    1
    -
    C_M
    (1-\gamma)^{\frac{M-1}{2}}
    +
    \mathcal O\left(
    (1-\gamma)^{\frac{M+1}{2}}
    \right),
\end{align}
which proves the lemma.
\section{Proof of Lemma \ref{lemma_0}}\label{appendix_0}
After changing variables in \eqref{lemma_eqn} using \(\ell=n-m\), the beamforming gain can be written as
\begin{align}
    \nu_k =
    \frac{1}{M^2}
    \sum_{\ell=-M+1}^{M-1}
    \sum_{m=\max\{0,-\ell\}}^{\min\{M-1,M-1-\ell\}}
    e^{j\pi\left(A(2m\ell+\ell^2)+(\theta_k-\theta_t)\ell\right)}.
\end{align}
We separate the summation into the three cases \(\ell<0\), \(\ell=0\), and \(\ell>0\). For \(\ell=0\), the exponential term is equal to \(1\), and hence the inner sum contributes $\sum_{m=0}^{M-1} 1 = M$.
Therefore, $\nu_k = \frac{\zeta_1 + M + \zeta_2}{M^2}$, where
\begin{align}
    \zeta_1
    &\triangleq
    \sum_{\ell=-M+1}^{-1}
    \sum_{m=-\ell}^{M-1}
    e^{j\pi\left(A(2m\ell+\ell^2)+\ell(\theta_k-\theta_t)\right)},\\
    \zeta_2
    &\triangleq
    \sum_{\ell=1}^{M-1}
    \sum_{m=0}^{M-1-\ell}
    e^{j\pi\left(A(2m\ell+\ell^2)+\ell(\theta_k-\theta_t)\right)}.
\end{align}
We first simplify \(\zeta_1\). Letting \(\ell \mapsto -\ell\), we obtain
\begin{align}
    \zeta_1
    =
    \sum_{\ell=1}^{M-1}
    e^{j\pi\left(A\ell^2-\ell(\theta_k-\theta_t)\right)}
    \sum_{m=\ell}^{M-1} e^{-j2\pi A\ell m}.
\end{align}
The inner sum is geometric, and applying the finite geometric-series identity yields
\begin{align}
    \sum_{m=\ell}^{M-1} e^{-j2\pi A\ell m}
    =
    e^{-j\pi A\ell(M+\ell-1)}
    \frac{\sin\big(\pi(M-\ell)A\ell\big)}{\sin(\pi A\ell)}.
\end{align}
Substituting this into the expression for \(\zeta_1\) gives
\begin{align}\label{delta_1_simplify_rewrite}
    \zeta_1
    =
    \sum_{\ell=1}^{M-1}
    e^{-j\pi \ell\big(A(M-1)+\theta_k-\theta_t\big)}
    \frac{\sin\big(\pi(M-\ell)A\ell\big)}{\sin(\pi A\ell)}.
\end{align}
Similarly, for \(\zeta_2\),
\begin{align}
    \zeta_2
    =
    \sum_{\ell=1}^{M-1}
    e^{j\pi\left(A\ell^2+\ell(\theta_k-\theta_t)\right)}
    \sum_{m=0}^{M-1-\ell} e^{j2\pi A\ell m}.
\end{align}
Again using the geometric-series identity, the inner sum becomes
\begin{align}
    \sum_{m=0}^{M-1-\ell} e^{j2\pi A\ell m}
    =
    e^{j\pi A\ell(M-1-\ell)}
    \frac{\sin\big(\pi(M-\ell)A\ell\big)}{\sin(\pi A\ell)}.
\end{align}
Hence,
\begin{align}\label{delta_2_simplify_rewrite}
    \zeta_2
    =
    \sum_{\ell=1}^{M-1}
    e^{j\pi \ell\big(A(M-1)+\theta_k-\theta_t\big)}
    \frac{\sin\big(\pi(M-\ell)A\ell\big)}{\sin(\pi A\ell)}.
\end{align}
Finally, substituting \eqref{delta_1_simplify_rewrite} and \eqref{delta_2_simplify_rewrite} into $\nu_k = \frac{\zeta_1+M+\zeta_2}{M^2}$, and using $e^{j\theta} + e^{-j\theta} = 2\cos(\theta)$, we obtain \eqref{nu_A_B}, which proves the lemma.
\section{Proof of Lemma \ref{lemma_1} }\label{appendix_A}
Define $u(p) \triangleq \frac{p}{\sin(\pi Ap)}$. Then, from \eqref{define_f}, $f_p(\theta_t, r_t) = u(p)\sin(\pi(M-p)Ap)$. We first approximate \(u(p)\). From \eqref{define_A_range} $A(\theta_t,r_t) \in \left[0, A_{\max} \right]$. This bound shows that the maximum possible value of $A(\theta_t,r_t)$ decreases as $M$ increases. Therefore, over the entire considered user region, $A(\theta_t,r_t)$ is small for large $M$, which justifies the use of small-angle expansion.

Using $\sin x = x + \mathcal{O}(x^3)$, with \(x=\pi Ap\), we obtain $\sin(\pi Ap) = \pi Ap\left(1+\mathcal{O}\left(A^2p^2\right)\right)$. Substituting this into \(u(p)\) gives $u(p) = \frac{1}{\pi A \left(1+\mathcal{O}\left(A^2p^2\right)\right)}$. Since $p\in [1,M-1]$ and $A \in \left[0, A_{\max} \right]$, we have $A^2p^2 \leq A_{\max}^2 (M-1)^2 = \mathcal{O}(\frac{1}{M})$. Thus, for all \(p\in\{1,2,\ldots,M-1\}\),
\begin{align}
    u(p)
    = \frac{1}{\pi A}\left(1+\mathcal{O}\left(\frac{1}{M}\right)\right).
\end{align}

Hence, for large \(M\), \(u(p)\) is approximately constant over the entire range of interest. Next, notice that the second factor in $f_p(\theta_t, r_t)$ is exactly symmetric under the reflection \(p \mapsto M-p\).
Specifically, $\sin(\pi(M-(M-p))A(M-p)) = \sin(\pi(M-p)Ap)$. Hence, the term $\sin(\pi(M-p)Ap)$ is exactly symmetric about $p = M/2$. Therefore, for large \(M\), $f_p(\theta_t, r_t)$ is approximately even about \(p=M/2\), which proves the lemma.
\section{Proof of Lemma \ref{lemma_2} }\label{appendix_B}
From Lemma \ref{lemma_1}, the function $f_p(\theta_t, r_t)$ is approximately even about $p= \frac{M}{2}$ for large $M$. Therefore, in \eqref{define_derivative},
the summation approximately cancels if $g_p(\theta_k, \theta_t, r_t)$ is odd about \(p=M/2\). Define $C \triangleq A(M-1)+\theta_k-\theta_t$ so that $ g_p(\theta_k, \theta_t, r_t) =\sin(\pi p C)$. For $ g_p(\theta_k, \theta_t, r_t) $ to be odd about $p=M/2$, we require $g_{M-p}(\theta_k, \theta_t, r_t)=-g_p(\theta_k, \theta_t, r_t)$. If $C=\frac{2b}{M}, b\in\mathbb Z$,
then $\pi MC=2b\pi$, and hence $g_{M-p}(\theta_k, \theta_t, r_t) = \sin\left(2b\pi-\pi pC\right) = -g_p(\theta_k, \theta_t, r_t)$. Thus, $g_p(\theta_k, \theta_t, r_t)$ is odd about $p=M/2$ for $\theta_k^{(b)} = \theta_t-A(M-1)+\frac{2b}{M}, b\in\mathbb Z$.
 
To quantify the residual, write
$
h(p)\triangleq \sin(\pi A p(M-p)).
$
Since
$
\frac{p}{\sin(\pi Ap)} = \frac{1}{\pi A} + \mathcal O(Ap^2)$, we have $ f_p(\theta_t,r_t) = \frac{h(p)}{\pi A} + h(p)\mathcal O(Ap^2)$. Since the first term is exactly even, only the residual contributes. Since $|h(p)|\leq1$ and $|g_p(\theta_k,\theta_t,r_t)|\leq1$, from \eqref{define_derivative},
$ \left. \frac{\partial \nu_k}{\partial \theta_k} \right|_{\theta_k=\theta_k^{(b)}} = \mathcal O(AM)$.
Since $A\leq A_{\max}=\mathcal O(M^{-3/2})$, the derivative is $\mathcal O(M^{-1/2})$. Hence, $\theta_k^{(b)}$ is an approximate stationary direction for large $M$.
\section{Proof of Lemma \ref{lemma_3} }\label{appendix_C}
From Lemma \ref{lemma_2}, the approximate stationary far-field directions of the RFF gain are $\theta_k^{(b)} = \theta_t - (M-1)A(\theta_t,r_t) + \frac{2b}{M}, b \in \mbb{N}_{0}$. 
Recall from \eqref{gain_define_critical_points},
\begin{align}
    G_b(A)
    =
    \frac{1}{M}
    +
    \frac{2}{M^2}
    \sum_{p=1}^{M-1}
    \cos\left(\pi p\frac{2b}{M}\right)
    \frac{\sin(\pi(M-p)Ap)}{\sin(\pi Ap)}.
\end{align}
We can rewrite $G_{b}(A)$ by substituting the $b$th candidate into the gain expression in \eqref{Y_k_original} as
\begin{align}
    G_b(A)
    =
    \frac{1}{M^2}
    \left|
    \sum_{n=0}^{M-1}
    e^{j\pi A(n^2-(M-1)n)}
    e^{j2\pi b n/M}
    \right|^2 .
\end{align}

The natural scaling of the curvature parameter is obtained from the quadratic phase term $\pi A n^2$. Since the largest antenna index is of order $M$, the total quadratic phase variation across the array is of order $AM^2$. Therefore, we define the scaled curvature parameter $\tau \triangleq AM^2.$ We now set $A=\tau/M^2$ and $x_n=n/M$. Then $A(n^2-(M-1)n) = \tau(x_n^2-x_n) + \frac{\tau x_n}{M}$. Hence, for fixed $\tau$ and fixed candidate index $q$,
\begin{align}
    e^{j \pi A(n^2 - (M-1)n)} e^{j 2 \pi b n/M} = e^{j \pi (\tau (x_n^2 - x_n) + 2 b x_n)} e^{j \pi \tau x_n/M}.
\end{align}
Since $e^{j \pi \tau x_n/M} = 1 + \mathcal{O}(1/M)$ for all $x_n \in [0,1]$, we get
\begin{align}
    \frac{1}{M}
    \sum_{n=0}^{M-1}
    &e^{j\pi A(n^2-(M-1)n)}
    e^{j2\pi b n/M}
    = \notag \\
    &\frac{1}{M}
    \sum_{n=0}^{M-1}
    e^{j\pi(\tau(x_n^2-x_n)+2 b x_n)}
    +
    \mathcal O\left(\frac{1}{M}\right).
\end{align}
The summation on the right-hand side is a Riemann sum. Therefore,
\begin{align}
    \frac{1}{M}
    \sum_{n=0}^{M-1}
    e^{j\pi A(n^2-(M-1)n)}
    e^{j2\pi b n/M}
    \to
    \int_0^1
    e^{j\pi(\tau(x^2-x)+2 b x)}
    dx
\end{align}
as $M\to\infty$. Therefore, $G_b\left(\frac{\tau}{M^2}\right)
    \to
    \mathcal G_b(\tau)$,
where
\begin{align}\label{define_G_q_alpha}
    \mathcal G_b(\tau)
    \triangleq
    \left|
    \int_0^1
    e^{j\pi(\tau(x^2-x)+2bx)}
    dx
    \right|^2.
\end{align}

The gain expression in \eqref{define_G_q_alpha} has a useful interpretation at $\tau = 0$. For $b=0$, $\mathcal{G}_{0}(0) =1$. For $b \geq 1 $, $\mathcal{G}_{q}(0) = 0$. Thus, when the near-field curvature is negligible, the candidate $q = 0$ is optimal. The integral in \eqref{define_G_q_alpha} can be expressed using the Fresnel integrals. Define $C(u)
    =
    \int_0^u
    \cos\left(\frac{\pi t^2}{2}\right)dt$ and $S(u)
    =
    \int_0^u
    \sin\left(\frac{\pi t^2}{2}\right)dt$. Completing the square in \eqref{define_G_q_alpha} gives
\begin{align}\label{fresnel_gain}
    \mathcal G_b(\tau)
    =
    \frac{1}{2\tau}
    \left[
    \left(C(u_b^+)-C(u_b^-)\right)^2
    +
    \left(S(u_b^+)-S(u_b^-)\right)^2
    \right],
\end{align}
where $u_b^-
    =
    -\sqrt{\frac{\tau}{2}}
    +
    \frac{\sqrt{2}b}{\sqrt{\tau}}$ and $u_b^+
    =
    \sqrt{\frac{\tau}{2}}
    +
    \frac{\sqrt{2}b}{\sqrt{\tau}}$.

For each $b \geq 1$, define $\tau_{\mathrm{peak},b}$ as the first local maximum of $\mathcal{G}_{b}(\tau)$. Numerical calculation of \eqref{fresnel_gain} gives
\begin{align}
\begin{array}{c|c|c}
b
&
\tau_{\mathrm{peak},b}
&
\mathcal G_b(\tau_{\mathrm{peak},b})
\\
\hline
1 & 5.28818 & 0.27533 \\
2 & 8.34186 & 0.17585 \\
3 & 11.13386 & 0.13140 \\
4 & 13.79617 & 0.10566 \\
5 & 16.37723 & 0.08871
\end{array}
\end{align}
Thus, the first local maximum shifts to larger values of $\tau$ as $b$ increases. This shows that higher order candidates become relevant only for larger values of $A(\theta_t,r_t)$. This makes intuitive sense because a larger $A(\theta_t,r_t)$ corresponds to a higher mismatch and a significant beam-split gain pattern.

We now define the transition points. Let $\beta_b$, with $\beta_0 = 0$, denote the scaled transition values corresponding to the sequence $\{ A_{\textrm{th},b} \}$, i.e., $A_{\textrm{th},b} = \frac{\beta_b}{M^2}$. Equivalently, $\beta_b$ is the value of $\tau$ that satisfies $\mathcal{G}_{b-1}(\beta_b) = \mathcal{G}_{b}(\beta_b).$
Numerically evaluating $\{ A_{\textrm{th},b} \}$ from \eqref{define_B}, we get
\begin{align}
\beta_1=4.72317,\quad
\beta_2=7.78690,\quad
\beta_3=10.58573.
\end{align}

We first consider the interval $\beta_1 \leq \tau \leq \beta_2$. The relevant scaled values are
\begin{align}
    \beta_1=4.72317,
    \quad \tau_{\mathrm{peak},1}=5.28818,
    \quad
    \beta_2=7.78690.
\end{align}

The gains using \eqref{fresnel_gain} at these values are
\begin{align}
\begin{array}{c|c|c|c}
b
&
\mathcal G_b(\beta_1)
&
\mathcal G_b(\tau_{\mathrm{peak},1})
&
\mathcal G_b(\beta_2)
\\
\hline
0 & 0.26825 & 0.19072 & 0.08540 \\
1 & 0.26825 & 0.27533 & 0.17253 \\
2 & 0.07458 & 0.09642 & 0.17253 \\
3 & 0.01611 & 0.02308 & 0.07466 \\
4 & 0.00414 & 0.00598 & 0.02405 \\
5 & 0.00140 & 0.00197 & 0.00773
\end{array}
\end{align}
At the left endpoint $\beta_1$, the candidates $b=0$ and $b=1$ tie. At the right endpoint $\beta_2$, the candidates $b=1$ and $b=2$ tie. Over this interval, the previous candidate $b=0$ is decreasing, while the higher-order candidates $b \geq 2$ have their first local maxima outside the interval and remain below $b = 1$, with $b = 2$ only tying at $\beta_2$. Therefore, $b = 1$ gives the largest gain on $\beta_1 \leq \tau \leq \beta_2$.

Similarly, consider the interval $\beta_2 \leq \tau \leq \beta_3$. The relevant scaled values are
\begin{align}
    \beta_2=7.78690,
    \quad
    \tau_{\mathrm{peak},2}=8.34186,
    \quad
    \beta_3=10.58573.
\end{align}
The gains using \eqref{fresnel_gain} at these values are
\begin{align}
\begin{array}{c|c|c|c}
b
&
\mathcal G_b(\beta_2)
&
\mathcal G_b(\tau_{\mathrm{peak},2})
&
\mathcal G_b(\beta_3)
\\
\hline
0 & 0.08540 & 0.09656 & 0.13203 \\
1 & 0.17253 & 0.13865 & 0.06056 \\
2 & 0.17253 & 0.17585 & 0.12940 \\
3 & 0.07466 & 0.08866 & 0.12940 \\
4 & 0.02405 & 0.03099 & 0.06838 \\
5 & 0.00773 & 0.01019 & 0.02736
\end{array}
\end{align}
At the left endpoint $\beta_2$, the candidates $b=1$ and $b=2$ tie. At the right endpoint $\beta_3$, the candidates $b=2$ and $b=3$ tie. Over this interval, the previous candidate $b=1$ is decreasing, while the higher-order candidates $b \geq 3$ have their first local maxima outside the interval and remain below $b=2$, with $b=3$ only tying at $\beta_3$. Therefore, $b=2$ gives the largest gain on $\beta_2 \leq \tau \leq \beta_3$, except on the sub-interval $(10.50374,10.81127)$ where $\mathcal{G}_0(\tau)$ dominates. This is the exception noted in Remark $\ref{remark_execption}$.

The same pattern is observed for subsequent intervals by numerically evaluating the gains. Hence, for large $M$, if $\beta_{b^{\star}} \leq AM^2 \leq \beta_{b^{\star}+1}$,
or equivalently $A_{\textrm{th},b^{\star}} \leq A \leq A_{\textrm{th},b^{\star}+1}$, then the maximizing far-field direction is $\theta_k^{(b^{\star})}$, up to the exception noted above. Substituting the direction $\theta_k^{(b^{\star})}$ into \eqref{nu_A_B} gives \eqref{max_ff_cb}. This proves the large-$M$ approximation.
\section{Proof of Lemma \ref{lemma_nf}}\label{appendix_nf}
Define the phase difference vector $\boldsymbol{\xi} \triangleq [\xi_0,\ldots,\xi_{M-1}]^{\mathsf T}$, where $\xi_n = \frac{2\pi}{\lambda} \left(r^{(n)}(\theta_k,r_k) - r^{(n)}(\theta_t,r_t) \right)$. Then, the beamforming gain of the $k$-th RNF codeword is $\nu_k = \frac{1}{M^2} \left| \sum_{n=0}^{M-1} e^{-j\xi_n} \right|^2$. Let $\Delta\theta=\theta_k-\theta_t$, $\Delta r=r_k-r_t$, and $ \mbf{o} = \begin{bmatrix} \Delta\theta \, \Delta r \end{bmatrix}^{\mathsf{T}}$. For small offsets around $(\theta_t,r_t)$, a first-order Taylor expansion of the phase map gives $\boldsymbol{\xi} = \mathbf J \mbf{o} + \mathcal O(\| \mbf{o} \|^2)$,
where $\mathbf J\in\mathbb R^{M\times 2}$ is the Jacobian of the phase vector with respect to $(\theta,r)$, evaluated at $(\theta_t,r_t)$.

Using the second-order Taylor expansion $e^{-j\xi_n} = 1-j\xi_n-\frac{\xi_n^2}{2} + \mathcal O(\xi_n^3)$, we obtain
\begin{align}
    \sum_{n=0}^{M-1}e^{-j\xi_n}
    =
    M
    -
    j\sum_{n=0}^{M-1}\xi_n
    -
    \frac{1}{2}\sum_{n=0}^{M-1}\xi_n^2
    +
    \mathcal O(\|\boldsymbol{\xi}\|^3).
\end{align}
Therefore,
\begin{align}
    \left|
    \sum_{n=0}^{M-1}e^{-j\xi_n}
    \right|^2
    =
    M^2
    -
    M\sum_{n=0}^{M-1}\xi_n^2
    +
    \left(\sum_{n=0}^{M-1}\xi_n\right)^2
    +
    \mathcal O(\|\boldsymbol{\xi}\|^3).
\end{align}
Dividing by $M^2$ gives
\begin{align}
    \nu_k
    =
    1
    -
    \frac{1}{M}\sum_{n=0}^{M-1}\xi_n^2
    +
    \frac{1}{M^2}
    \left(\sum_{n=0}^{M-1}\xi_n\right)^2
    +
    \mathcal O(\|\boldsymbol{\xi}\|^3).
\end{align}
Now define $\mathbf P \triangleq \mathbf I_M-\frac{1}{M}\mathbf 1_M\mathbf 1_M^{\mathsf T}$, where $\mathbf I_M$ is the $M \times M$ identity matrix and $\mathbf 1_M$ is the $M \times 1$ all-ones vector. Since
\begin{align}
    \|\mathbf P\boldsymbol{\xi}\|^2
    =
    \sum_{n=0}^{M-1}\xi_n^2
    -
    \frac{1}{M}
    \left(\sum_{n=0}^{M-1}\xi_n\right)^2,
\end{align}
we can write
\begin{align}
    \nu_k
    =
    1
    -
    \frac{1}{M}
    \|\mathbf P\boldsymbol{\xi}\|^2
    +
    \mathcal O(\|\boldsymbol{\xi}\|^3).
\end{align}
Using $\boldsymbol{\xi} = \mathbf J \mbf{o} + \mathcal O(\| \mbf{o} \|^2)$ inside $\lVert \mbf{P} \boldsymbol{\xi} \rVert^2$ gives
\begin{align}
    \|\mathbf P\boldsymbol{\xi}\|^2
    =
    \mbf{o} ^{\mathsf T}
    \mathbf J^{\mathsf T}\mathbf P\mathbf J
    \mbf{o}
    +
    \mathcal O(\|\mbf{o}\|^3).
\end{align}
Thus,
\begin{align}
    \nu_k
    =
    1
    -
    \frac{1}{M}
    \mbf{o}^{\mathsf T}
    \mathbf J^{\mathsf T}\mathbf P\mathbf J
    \mbf{o}
    +
    \mathcal O(\|\mbf{o}\|^3).
\end{align}
Let $\mathbf Q \triangleq \mathbf J^{\mathsf T}\mathbf P\mathbf J$. Thus, the high-gain event
$\nu_k\ge 1-\epsilon$ is, up to higher-order terms, equivalent to $\mbf{o}^{\mathsf T} \mathbf Q \mbf{o} \le M\epsilon$. This inequality defines an ellipse in the $(\Delta\theta,\Delta r)$ plane. Its semi-axis lengths scale as $\sqrt{\epsilon}$, so its area scales linearly with $\epsilon$. The cubic remainder perturbs the boundary by order $\epsilon^{3/2}$, and hence
\begin{align}\label{cdf_ap1}
    \Pr\{\nu_k\ge 1-\epsilon\}
    =
    G_M\epsilon
    +
    \mathcal O(\epsilon^{3/2}),
\end{align}
where $G_M>0$ depends on the local sampling density of $(\theta_k,r_k)$ and on the local geometry through $\mathbf Q$. Using \eqref{cdf_ap1} together with $ 1-F_\nu(1-\epsilon) = \Pr\{\nu\geq 1-\epsilon\}$, and then setting $\epsilon=1-\gamma$ proves the lemma.
\end{document}